\documentclass{article} 
\usepackage{times} 
\usepackage[round]{natbib} 

\usepackage{bbm}
\usepackage{graphicx}
\usepackage{amsmath,amssymb,amsthm,amsfonts}
\usepackage{url}
\usepackage[colorlinks=true,citecolor=blue,urlcolor=blue,linkcolor=blue]{hyperref} 
\usepackage[dvipsnames]{xcolor} 
\usepackage{booktabs} 
\usepackage{multirow} 
\usepackage[flushleft]{threeparttable} 
\usepackage[parfill]{parskip} 
\usepackage[shortlabels]{enumitem} 
\usepackage{subcaption} 
\usepackage{pifont} 
\usepackage{etoc} 
\usepackage{algorithm} 
\usepackage{algpseudocode}
\algrenewcommand\algorithmicrequire{\textbf{Input:}}
\algrenewcommand\algorithmicensure{\textbf{Output:}}
\usepackage[capitalise]{cleveref} 
\usepackage{xspace} 
\usepackage{makecell} 
\usepackage[most]{tcolorbox} 

\def\E{{\mathbb E}}
\def\P{{\mathbb P}}
\def\R{{\mathbb R}}

\def\cC{{\mathcal C}}

\def\cH{{\mathcal H}}
\def\cI{{\mathcal I}}

\def\cK{{\mathcal K}}

\def\cP{{\mathcal P}}

\def\cX{{\mathcal X}}
\def\cY{{\mathcal Y}}

\newtheorem{theorem}{Theorem}

\newtheorem{proposition}{Proposition}
\newtheorem{remark}{Remark}

\newtheorem{corollary}{Corollary}

\newcommand{\Q}{\mathsf{Q}}
\newcommand{\softmax}{\mathsf{softmax}}

\newcommand{\standard}{\textsc{Standard}\xspace}
\newcommand{\classwise}{\textsc{Classwise}\xspace}
\newcommand{\labelw}{\textsc{Label-weighted}\xspace}
\newcommand{\truncplant}{Pl@ntNet\xspace}
\newcommand{\truncinat}{iNaturalist\xspace}

\newcommand\numberthis{\addtocounter{equation}{1}\tag{\theequation}} 

\usepackage[suppress]{color-edits}
\addauthor{td}{ForestGreen}
\usepackage[margin=1in]{geometry} 

\title{Conformal prediction with macro-coverage guarantees}  

\author{
Aabesh Bhattacharyya$^{1}$\thanks{Equal contribution.}
\quad
Tiffany Ding$^{2}$\footnotemark[1]
\quad
Rina Foygel Barber$^{1}$ \\
\\
$^{1}$Department of Statistics, University of Chicago \\
$^{2}$Department of Statistics, UC Berkeley
}
\date{}

\begin{document}
\maketitle

\begin{abstract}
Prediction sets should have high coverage to be useful, but some coverage notions are more practically relevant than others. In the classification setting, class-conditional coverage requires that the prediction set (i.e., the set of candidate labels for a new test point) must achieve the target accuracy level within each class, which may be challenging to satisfy when many classes are rare and have few calibration points. At the other extreme, marginal coverage requires only that coverage holds on average over the distribution of all classes, which can lead to low-probability labels being essentially ignored. To find a middle ground, recent work has introduced macro-coverage, defined as the unweighted average of class-conditional coverages. Macro-coverage offers a compromise between marginal coverage and class-conditional coverage that is particularly appropriate for long-tailed settings. In this work, we show that label-weighted conformal prediction can be used to produce prediction sets with a finite-sample macro-coverage guarantee, and more generally a guarantee on a family of generalized macro-coverage objectives that aggregate coverage at the level of arbitrary class groupings and take a weighted average. We further characterize the form of the smallest prediction sets satisfying a given generalized macro-coverage objective and propose a corresponding conformal score function. We validate our theoretical results on two large-scale image classification datasets.
\end{abstract}

\section{Introduction}

Conformal prediction provides a principled way to convert the output of a black-box predictive model into a prediction set with a finite-sample coverage guarantee under exchangeability assumptions \citep{vovk2005algorithmic,papadopoulos2002inductive,lei2018distribution,angelopoulos2023conformal}. In this work, we focus on multiclass classification, where the goal is to use features $X \in \cX$ to predict a response $Y$ that takes values in a finite label space $\mathcal{Y}$. Of particular interest is long-tailed classification, where the label space is very large and the class frequencies vary by orders of magnitude. Such long-tailed distributions appear in problems such as species identification and medical diagnosis, and in order for prediction sets to be useful in such settings, they should be small enough to be inspected by a human while still reliably containing the true label, even when the true label belongs to a rare class.

Given a target miscoverage level $\alpha \in [0,1]$, standard split conformal prediction \citep{vovk2005algorithmic} provides a way to construct prediction sets $\mathcal{C}:\mathcal{X}\to 2^\mathcal{Y}$ that satisfy
\[
    \mathbb{P}(Y \in \mathcal{C}(X)) \geq 1-\alpha
\]
under the assumption that the observed data is exchangeable. 
This guarantee holds on average over the population, which can lead to undesirable behavior. 
Writing $p(y)=\mathbb{P}(Y=y)$, we can decompose marginal coverage as
\[
    \mathbb{P}(Y \in \mathcal{C}(X))
    =
    \sum_{y\in\mathcal{Y}} p(y) \cdot \mathbb{P}(Y \in \mathcal{C}(X)\mid Y=y).
\]
For a long-tailed distribution, where $p(y)$ is much larger for some $y$ than others, the marginal guarantee places most of its weight on frequent classes. A method may therefore attain marginal coverage while substantially undercovering tail classes. This phenomenon has been observed in recent work on conformal prediction for long-tailed classification, where standard conformal methods tend to overcover common classes and undercover rare classes \citep{ding2026conformal,liu2026conformal}.

A natural remedy is to seek \emph{class-conditional coverage}:
\[
    \mathbb{P}(Y \in \mathcal{C}(X)\mid Y=y)
    \geq 1-\alpha
    \qquad \text{for all } y\in\mathcal{Y}.
\]

Class-conditional coverage treats every class separately and is therefore attractive in class-imbalanced settings. However, it can be too stringent in the long-tailed regime;
when the calibration set contains only a few examples from a rare class, or none at all, class-specific quantile estimates become overly conservative.
Consequently, classwise conformal methods may produce prediction sets that are very large and hence uninformative
\citep{vovk2012conditional,ding2023class}.
This creates a tension: standard (marginal) conformal prediction is efficient but can fail on rare classes, whereas class-conditional conformal prediction ensures coverage for every class but can be impractical when many classes have limited calibration data.

To balance this tension, \citet{ding2026conformal} introduced \emph{macro-coverage}, which can be thought of as a relaxation of class-conditional coverage. This concept is inspired by macro-accuracy in multiclass classification \citep{lewis1991evaluating}, where ``macro'' refers to the idea of zooming out to the class level before aggregating. Macro-coverage averages the class-conditional coverages uniformly over labels:
\begin{align} \label{eq:macrocov}
    \operatorname{MacroCov}(\mathcal{C})
    :=
    \frac{1}{|\mathcal{Y}|}
    \sum_{y\in\mathcal{Y}}
    \mathbb{P}(Y \in \mathcal{C}(X)\mid Y=y).
\end{align}
Unlike marginal coverage, macro-coverage gives each class equal weight in the coverage criterion, regardless of its prevalence. By aiming to achieve macro-coverage, we can hope to do better on rare classes, without going to the extreme of asking for high class-conditional coverage for each class individually. 

\paragraph{Our contribution.}
In this paper, we propose to use \emph{label-weighted conformal prediction} to produce prediction sets that directly achieve a macro-coverage guarantee (rather than a stronger, but excessively stringent, class-conditional coverage guarantee), which no existing method is able to do.
We prove that label-weighted conformal prediction can be used to not only achieve a macro-coverage guarantee but also to guarantee generalized notions of macro-coverage, where aggregation occurs at the level of an arbitrary grouping of classes and the weights of each group do not have to be equal. Notably, these weights can even be chosen after seeing partial information about the calibration data, allowing us to downweight coverage of groups that do not appear in the calibration data to avoid producing uninformative sets. 
Furthermore, given a desired generalized macro-coverage objective, we derive the form of the smallest sets that satisfy this objective and propose a corresponding conformal score function for approximating these theoretically optimal sets. 

\subsection{Related work}

Class-conditional and group-conditional conformal prediction provide stronger guarantees than marginal conformal prediction by requiring coverage within each class or group \citep{vovk2012conditional}. In multiclass problems with many classes, however, classwise calibration can lead to large sets because rare classes may have very few calibration examples, necessitating large finite-sample adjustments. Clustered conformal prediction addresses this by targeting the data scarcity issue and groups data from classes with similar score distributions \citep{ding2023class}. On the other hand, rank-calibrated class-conditional conformal prediction directly targets set sizes by introducing a rank-based threshold to the set construction, while maintaining class-conditional coverage \citep{shi2024conformal}. However, these methods are not designed to handle settings with extreme class imbalance where many classes have few calibration examples. Whereas previous work aims for class-conditional coverage, our work targets weighted averages of group-conditional coverages, which is a more practical objective in highly class imbalanced settings.

Our work is most directly motivated by conformal prediction for long-tailed classification~\citep{ding2026conformal}, where the authors introduced macro-coverage and derived the prevalence-adjusted softmax score as an approximately optimal score for this objective. Concurrently, \citet{liu2026conformal} proposed tail-aware conformal methods to reduce coverage disparities between head and tail classes. These methods, although motivated by the limitations of marginal coverage in long-tailed settings, still provide marginal rather than macro-coverage guarantees. We instead use label-weighted conformal calibration to obtain finite-sample guarantees for macro-coverage and generalized macro-coverage objectives.

Technically, our label-weighted conformal prediction approach is related to weighted conformal prediction, which has been used to address covariate shift, label shift, and other departures from exchangeability \citep{tibshirani2019conformal,podkopaev2021distribution,barber2022conformal}. We build most directly on group-weighted conformal prediction \citep{bhattacharyya2024group}, adapting its finite-sample arguments from groups defined by features to groups defined by labels. Lastly, our optimal score construction is connected to least ambiguous set-valued classification \citep{sadinle2019least}, but specialized to the generalized macro-coverage objective we introduce.

\section{Method} \label{sec:method}

We first generalize the concept of macro-coverage, as defined in \eqref{eq:macrocov}, then define an algorithm for producing prediction sets guaranteed to have generalized macro-coverage of at least $1-\alpha$ for any user-chosen $\alpha \in [0,1]$.

\emph{Setting.} Let $\alpha \in [0,1]$ be a desired miscoverage level. Consider independent and identically distributed samples $(X_1,Y_1),\dots,(X_n,Y_n), (X_{n+1},Y_{n+1})$, where $\{(X_1,Y_1),\dots,(X_n,Y_n)\}$ form the calibration dataset and $(X_{n+1},Y_{n+1})$ is the test point, with $Y_{n+1}$ unobserved. We operate in the split conformal prediction setting and assume access to a fixed conformal score function $s: \cX \times \cY \to \R$ trained on data that is separate from the calibration samples, where a larger value of $s(x,y)$ indicates that $y$ is less likely to be the label corresponding to $x$.

\subsection{Generalized macro-coverage} \label{sec:generalized_macrocov}

Let $g: \cY \to \cK$ be a grouping function that assigns each class to a group and let $w: \cK \to \R_{\geq 0}$ be a fixed function assigning weights to each group such that $\sum_{k \in \cK} w(k) = 1$.
For any prediction set $\mathcal C: \mathcal{X}\to 2^{\mathcal{Y}}$, we define the \emph{$(g,w)$ macro-coverage} as 
\[
\operatorname{MacroCov}_{g,w}(\mathcal C)
:=
\sum_{k \in \cK}
w(k) \cdot
\mathbb P\bigl(Y_{n+1} \in \mathcal C(X_{n+1})\mid g(Y_{n+1})=k\bigr),
\]
the weighted average of group-conditional coverages. 
More generally, we can choose the weights assigned to each group \emph{after observing partial information about the calibration data}. Formally, denote the group membership of each calibration example as $G_i := g(Y_i)$ and record the realized group memberships of the calibration examples in 
\[
\mathcal H := \sigma(G_1,\dots,G_n).
\]
The weight function $w$ can be any $\cH$-measurable function. For such weight functions, we express the $(g,w)$ macro-coverage as 
\[
\operatorname{MacroCov}_{g,w}(\mathcal C \mid \mathcal H)
:=
\sum_{k \in \cK}
w(k) \cdot
\mathbb P\bigl(Y_{n+1} \in \mathcal C(X_{n+1})\mid g(Y_{n+1})=k,\mathcal H\bigr).
\]
By choosing $g$ and $w$ appropriately, practitioners can prioritize coverage across head and tail classes, clinically meaningful subpopulations, demographic groups, or other pre-specified partitions of interest. 

\paragraph{Examples.} 
We now present some instantiations of generalized macro-coverage. 
\begin{enumerate}[leftmargin=2em]
    \item \emph{Vanilla macro-coverage.} Using the identity group function $g(y) = y$ and uniform weights $w(y) = 1/|\cY|$ yields $\mathrm{MacroCov}$, as defined in \eqref{eq:macrocov}. We will refer to this as simply ``macro-coverage.''
    \item \emph{Tail-focused macro-coverage.} Let $\cY_{\mathrm{tail}} \subset \cY$ denote the classes in the tail of the class distribution (e.g.,  $\cY_{\mathrm{tail}}$ could be the 10\% of classes with the fewest examples). In some settings, it may be especially important to correctly identify these rare classes. For some upweighting factor $\lambda > 1$, the tail-focused macro-coverage (denoted $\mathrm{MacroCov}_{\mathrm{tail}}$) is defined to be the $(g,w)$ macro-coverage with the identity grouping function $g(y) = y$ and weight function 
    \begin{align*}
    w(y) = \begin{cases}
    \frac{\lambda}{W} & \text{if } y \in \cY_{\text{tail}} \\
    \frac{1}{W} & \text{otherwise,}
    \end{cases}
    \end{align*}
    where $W = \lambda |\cY_{\mathrm{tail}}| + |\cY \setminus \cY_{\mathrm{tail}}|$ is a normalization constant. 
    
    \item \emph{Genus-level macro-coverage.} Rather than aggregating at the class level, in settings where classes have a hierarchical structure, we may instead wish to aggregate at a higher level. For example, in plant identification, each class is a plant species that is further grouped into a genus. We get $\mathrm{GenusMacroCov}$ by using the grouping function $g: \cY \to \{1, 2, \dots, M\}$ that groups plant species by genus, where $M$ denotes the total number of genera, and the weight function $w(k) = 1/M$ for $k \in [M]$ that assigns uniform weights to each genus. By targeting $\mathrm{GenusMacroCov}$ rather than $\mathrm{MacroCov}$, we upweight the importance of species that are abundant \emph{relative to other species in the same genus} (and downweight species that are relatively rare within a genus).
    \item \emph{Marginal coverage.} The standard conformal prediction objective of marginal coverage is also an instance of generalized macro-coverage, where $g$ groups all classes into a single group (e.g., $g(y) = 1$ for all $y$), so the only valid weight function is $w(1) = 1$. 
\end{enumerate}

\subsection{Label-weighted conformal prediction}

For each group $k \in \cK$, define $\cI_k = \{i \in [n] : g(Y_i) = k\}$ to be the indices of calibration examples in group $k$ and let $N_k = |\cI_k|$ be the number of calibration examples in group $k$. We index the calibration scores corresponding to group $k$ as $\smash{S^{(k)}_i}$ for $i = 1, 2, \dots, |\cI_k|$.

We then define the \emph{$(g,w)$ label-weighted conformal prediction set} as 
\begin{equation}\label{eq:GWCP_prediction_set}
    \mathcal C(x)
:=
\left\{y\in\mathcal Y:s(x,y)\leq  \Q_{1-\alpha}\left(\sum_{k\in \cK}
w(k) P_k\right)\right\},
\end{equation}
where $\Q_{1-\alpha}(\cP) := \min\{ t \in \R : \P_{V \sim \cP}(V \leq t) \geq 1-\alpha\}$ denotes the $1-\alpha$ quantile of distribution $P$ and
\begin{align*}
    P_k = \begin{cases}
        \frac{1}{N_k}\sum_{j=1}^{N_k}\delta_{S_j^{(k)}} & \text{if $N_k > 0$} \\
        \delta_{\infty} & \text{if $N_k = 0$}.
    \end{cases}
\end{align*}
In words, we compute the $1-\alpha$ quantile of the weighted average of empirical score distributions (where we substitute in a point mass at infinity for groups with no calibration examples), then use this quantile to threshold the score to determine which $y$ values to include in the set. 

\begin{theorem} \label{thm:macrocov_guarantee}
Let $\cK_{+} = \{k \in \cK : N_k > 0\}$ be the set of groups with non-zero calibration examples. Then the prediction set given in \eqref{eq:GWCP_prediction_set} satisfies
$$\operatorname{MacroCov}_{g,w}(\mathcal C \mid \mathcal H) \geq 1-\alpha- \max_{k \in \cK_+} \frac{w(k)}{N_k}.$$

\end{theorem}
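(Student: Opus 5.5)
Conditional on $\mathcal H$, the plan is to reduce the $(g,w)$ macro-coverage to a within-group exchangeability statement, following the group-weighted argument of \citet{bhattacharyya2024group}. Given $\mathcal H$, the labels' groups $G_1,\dots,G_n$ are fixed, and so are the counts $N_k$ and the weights $w(k)$. Because the samples are i.i.d., conditioning on $\mathcal H$ and on $g(Y_{n+1})=k$ makes the $N_k+1$ scores $S^{(k)}_1,\dots,S^{(k)}_{N_k},S_{n+1}:=s(X_{n+1},Y_{n+1})$ i.i.d.\ draws from the group-$k$ score distribution. They are also independent of the scores of the other groups. Groups $k\notin\cK_+$ need no work: there $P_k=\delta_\infty$, and any weight on them only makes the quantile larger. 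For a rigorous treatment one can note that if $w(k)>\alpha$ for such a $k$ the threshold is $\infty$ and coverage is trivial. In any case, the relevant group-$k$ term can simply be lower bounded separately.

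The key device is a ``test-point-augmented'' quantile. Fix $k\in\cK_+$. Let $\hat P_k^{+}$ be the empirical distribution of the $N_k+1$ group-$k$ scores including $S_{n+1}$. Let $\hat Q^{(k)}$ be the $1-\alpha$ quantile of $\sum_{j\ne k} w(j)P_j + w(k)\hat P_k^{+}$. Replacing $P_k$ by $\hat P_k^{+}$ moves at most mass $w(k)/N_k$ in the mixture: each atom of weight $w(k)/N_k$ becomes weight $w(k)/(N_k+1)$, plus one new atom. So the total variation between the two mixtures is at most $w(k)/(N_k+1)\le w(k)/N_k$. I would then show the deterministic inclusion
\[
\{S_{n+1}\le \hat Q^{(k)}_{1-\alpha+\,w(k)/N_k\text{-adjusted}}\}\subseteq\{Y_{n+1}\in\mathcal C(X_{n+1})\}.
\]
More concretely, $S_{n+1}\le \hat Q^{(k)}$ computed at level $1-\alpha$ implies $S_{n+1}\le \Q_{1-\alpha}(\sum w(j)P_j)$, up to losing $w(k)/N_k$ in level. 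I expect to use monotonicity of quantiles: replacing $S_{n+1}$ by $+\infty$ inside $\hat P_k^+$ only increases the quantile, and on the event $S_{n+1}$ exceeds the threshold the two quantiles coincide, the standard ``replace by infinity'' trick.

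Next comes the exchangeability step. Conditional on $\mathcal H$, $g(Y_{n+1})=k$ and the other groups' scores, the mixture in $\hat Q^{(k)}$ is a symmetric function of the $N_k+1$ exchangeable group-$k$ scores. Hence the probability that $S_{n+1}$ exceeds it equals the average over $i$ of the probability that the $i$-th group-$k$ score exceeds it. This average equals the expected $\hat P_k^+$-mass above the quantile. Then $\mathbb P(Y_{n+1}\in\mathcal C\mid g=k,\mathcal H)\ge \E[\hat P_k^{+}(\le \hat Q)]$, up to the slack. Multiplying by $w(k)$ and summing over $k$, the per-group masses recombine into the mass of the full mixture below its own $1-\alpha$ quantile, which is $\ge 1-\alpha$. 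The main obstacle is that $\hat Q^{(k)}$ depends on $k$, so the sum over $k$ does not directly recombine into one mixture.

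To handle that obstacle I would try a common quantile instead. Define the single random quantity $\hat Q$ using the mixture $\sum_k w(k)P_k$ itself. Bound group-$k$ miscoverage by the average over the $N_k+1$ group-$k$ scores of $\mathbb 1\{\cdot>\hat Q\}$, with the swap of the test point into the calibration set costing at most one atom of weight $w(k)/N_k$ in that group. Summing gives $\sum_k w(k)P_k(>\hat Q)\le\alpha$, plus $\sum_k w(k)\cdot(\text{one-atom error}_k)$. The plan then is to show that only the group containing the test point contributes its atom error. Because only one group's term is active in each conditional bound, the total error is bounded by $\max_{k\in\cK_+} w(k)/N_k$ rather than the sum. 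This gives $1-\alpha-\max_{k\in\cK_+}w(k)/N_k$.
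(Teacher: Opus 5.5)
Your final step has a genuine gap. Once you charge the swap cost additively to each group's coverage, i.e.\ $\mathbb P(S_{n+1}>\widehat Q\mid g(Y_{n+1})=k,\cH)\le \E[P_k(>\widehat Q)\mid\cH]+e_k$ with the common threshold $\widehat Q=\Q_{1-\alpha}(\sum_k w(k)P_k)$, the weighted sum is $\alpha+\sum_k w(k)e_k$, and every group's $e_k$ appears in it.

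It is true that only one group is active in each conditional bound, but that does not help. The bounds are for different test groups, and all of them get added. With the natural per-group error $e_k=1/N_k$ (one atom of $P_k$), you get $\sum_{k\in\cK_+}w(k)/N_k$, not the maximum.

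Suppose instead you intend $e_k=w(k)/N_k$ in group-coverage units, which would give $\sum_k w(k)e_k\le\max_k w(k)/N_k$ immediately. That per-group bound is false. Take a group with weight $w(k)=\epsilon>\alpha$ whose scores always lie above those of all other groups. Then $\widehat Q$ is always one of its order statistics. Its coverage falls short of $\E[P_k(\le\widehat Q)\mid\cH]$ by about $(1-\alpha/\epsilon)/N_k$, which is much more than $\epsilon/N_k$ when $\alpha\ll\epsilon\ll 1$.

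The additive route can be rescued when there are no ties, but only by an argument you do not give. A swap error for index $j$ of group $k$ can occur only if $S^{(k)}_j$ is the atom of the mixture located at $\widehat Q$. At most one pair $(k,j)$ can be that atom, so the total error is at most $\max_k w(k)/N_k$. Ties need further care.

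The missing device, which is how the paper obtains the maximum, is to put the slack into the quantile level instead of into the coverage. Let $\widehat P_{g,w}=\sum_k w(k)P_k$, and let $\widehat P^{\,k,j}$ be the same distribution with $S^{(k)}_j$ replaced by an independent group-$k$ test score. Then $d_{\mathrm{TV}}(\widehat P_{g,w},\widehat P^{\,k,j})\le w(k)/N_k\le\Delta:=\max_{k\in\cK_+}w(k)/N_k$. Hence $\Q_{1-\alpha}(\widehat P^{\,k,j})\ge\Q_{1-\alpha-\Delta}(\widehat P_{g,w})$ for every $k$ and $j$.

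This lower threshold does not depend on $k$. So after the exchangeability swap, the coverage of group $k$ is at least $\frac{1}{N_k}\sum_{j}\mathbb P(S^{(k)}_j\le\Q_{1-\alpha-\Delta}(\widehat P_{g,w})\mid\cH)$. The $w(k)$-weighted sum then recombines exactly into $\E[\widehat P_{g,w}(\le\Q_{1-\alpha-\Delta}(\widehat P_{g,w}))\mid\cH]\ge 1-\alpha-\Delta$. The same device removes the obstacle you correctly identified with your $k$-dependent augmented quantile $\widehat Q^{(k)}$.

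A minor point: the threshold is infinite when the total weight of the empty groups exceeds $\alpha$, not when a single $w(k)$ does. The paper handles empty groups by renormalizing the weights over $\cK_+$ and applying the all-observed case at level $(1-\alpha)/W_+$.
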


Note that conditional on $\mathcal{H}$, $\max_{k \in \cK_+} \frac{w(k)}{N_k}$ is a constant. The proof of this result closely follows the proof of validity for the group-weighted conformal prediction approach proposed in \citet{bhattacharyya2024group}, but adapted to label-based groupings. The detailed proof is given in Appendix~\ref{app:proof_validity}. The fact that the guarantee holds conditional on $\mathcal{H}$ makes it stronger than the unconditional macro-coverage guarantee.

This theorem directly implies a procedure for producing prediction sets with a $(g,w)$ macro-coverage guarantee of $1-\alpha$, which we describe in Algorithm \ref{alg:lwcp}. In brief, we must simply compute the correction factor $\Delta := \max_{k \in \cK_+} \frac{w(k)}{N_k}$, then construct the label-weighted conformal prediction set at a corrected miscoverage level $\alpha' = \alpha - \Delta$. 

\begin{corollary}
Let $\cC$ be the prediction set produced by Algorithm~\ref{alg:lwcp} with target miscoverage level $\alpha \in [0,1]$. Then
$
\operatorname{MacroCov}_{g,w}(\cC \mid \cH) \geq 1-\alpha$.
\end{corollary}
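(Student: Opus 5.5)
The corollary should follow by applying Theorem~\ref{thm:macrocov_guarantee} at the corrected level rather than by a new exchangeability argument. Algorithm~\ref{alg:lwcp} computes $\Delta = \max_{k \in \cK_+} w(k)/N_k$ and outputs the $(g,w)$ label-weighted set \eqref{eq:GWCP_prediction_set} at level $\alpha' = \alpha - \Delta$. If Theorem~\ref{thm:macrocov_guarantee} can be invoked with $\alpha'$ in place of $\alpha$, it gives
\[
\operatorname{MacroCov}_{g,w}(\cC \mid \cH) \geq 1-\alpha' - \max_{k\in\cK_+}\frac{w(k)}{N_k} = 1-(\alpha-\Delta)-\Delta = 1-\alpha,
\]
which is the claim. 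The work is in checking that this substitution is legitimate.

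\textbf{Step 1: the level $\alpha'$ may be random.} Theorem~\ref{thm:macrocov_guarantee} is stated for a fixed $\alpha$, but $\alpha'$ depends on the data through $\Delta$. However, $\Delta$ is a function only of $N_k = |\{i : G_i = k\}|$ and of $w$, both of which are $\cH$-measurable. The theorem is a statement conditional on $\cH$, and its proof already allows the $\cH$-measurable weights $w$ to be treated as fixed after conditioning. The same reasoning covers $\alpha'$: conditional on $\cH$ it is a constant. I would make this explicit either by noting that the proof of Theorem~\ref{thm:macrocov_guarantee} uses $\alpha$ only after conditioning on $\cH$, or by conditioning on each realization of $(G_1,\dots,G_n)$ and applying the theorem with that realization's fixed $\alpha'$. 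This is the step needing the most care, though I expect it to be routine given how the theorem is set up.

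\textbf{Step 2: the case $\alpha' < 0$.} If $\Delta > \alpha$, then $\alpha' \notin [0,1]$ and the theorem as stated does not apply. In this case the quantile level is $1-\alpha' > 1$. No $t \in \R$ satisfies $\P_V(V \le t) \ge 1-\alpha' > 1$, so $\Q_{1-\alpha'} = +\infty$ under the convention $\min\emptyset = +\infty$, consistent with the point masses $\delta_\infty$ in the definition of $P_k$. Then $\cC(x) = \cY$ for every $x$, so every group-conditional coverage equals $1$. Since $\sum_k w(k) = 1$, the macro-coverage is $1 \ge 1-\alpha$. If Algorithm~\ref{alg:lwcp} instead explicitly returns $\cY$ when $\alpha' < 0$, the conclusion is the same. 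The case $\alpha' \in [0,1]$ is handled by Step~1, which completes the proof.
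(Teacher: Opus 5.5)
Your proposal is correct and follows the same route the paper intends: the corollary comes from applying Theorem~\ref{thm:macrocov_guarantee} at the $\cH$-measurable level $\alpha' = \alpha - \Delta$, so that the $\Delta$ terms cancel, and the case $\alpha' < 0$ is trivial because the algorithm returns $\cY$. Your explicit justification of why a level that depends on the data through $\cH$ is admissible is a useful piece of rigor that the paper leaves implicit; you give it either by conditioning on $\cH$ or by conditioning on each of the finitely many realizations of $(G_1,\dots,G_n)$.
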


\begin{algorithm}[h] 
\caption{Label-weighted conformal prediction for guaranteed $(g,w)$ macro-coverage}
\label{alg:lwcp}
\begin{algorithmic}[1]
\Require Calibration data $\{(X_i, Y_i)\}_{i=1}^n$, score function $s$, grouping function $g: \cY \to \cK$, group weights $w: \cK \to \R_{\geq 0}$ with $\sum_{k \in \cK} w(k) = 1$, target miscoverage $\alpha \in [0,1]$, test point $X_{n+1}$
\Ensure Prediction set $\cC(X_{n+1})$ with $(g,w)$ macro-coverage of at least $1-\alpha$ 
\State Compute calibration scores $S_i \gets s(X_i, Y_i)$ and groups $G_i \gets g(Y_i)$ for $i \in [n]$
\For{each group $k \in \cK$}
    \State $\cI_k \gets \{i \in [n] : G_i = k\}$
    \State $N_k \gets |\cI_k|$
    \State $P_k \gets \frac{1}{N_k}\sum_{i \in \cI_k}\delta_{S_i}$ if $N_k > 0$, else $P_k \gets \delta_{\infty}$
\EndFor
\State $\cK_+ \gets \{k \in \cK : N_k > 0\}$
\State $\alpha' \gets \alpha - \max_{k \in \cK_+} \frac{w(k)}{N_k}$
\State $\cC(X_{n+1}) \gets \{y \in \cY : s(X_{n+1}, y) \leq \Q_{1 - \alpha'}\!\left(\sum_{k \in \cK} w(k)\, P_k\right)\}$ if $\alpha' \geq 0$, else $\cC(X_{n+1}) = \cY$
\end{algorithmic}
\end{algorithm}

When implementing label-weighted conformal prediction, it can be useful to recast it as vanilla weighted conformal prediction, in which a weight is assigned to each calibration point. Let $\cK_0 := \{k \in \cK: N_k = 0\}$ be the groups that have no calibration examples. Then \eqref{eq:GWCP_prediction_set} is equivalent to 
\begin{equation}
    \mathcal C(x) =
\left\{y\in\mathcal Y:s(x,y)\leq  \Q_{1-\alpha}\left(\sum_{i=1}^n
w_i \delta_{S_i} + \sum_{k \in \mathcal{K}_0} w(k) \delta_{\infty}\right)\right\},
\end{equation}
where the weight of calibration point $(X_i, Y_i)$ is $w_i := w(g(Y_i)) / N_{g(Y_i)}$. Note that the calibration point weights do not sum to one ($\sum_{i=1}^n w_i \neq 1$) if there are groups $k \in \cK$ that have zero calibration examples. Thus, the $\sum_{k \in \mathcal{K}_0} w(k) \delta_{\infty}$ term ensures that we are taking the quantile of a valid probability distribution, with a total mass of one. From this, it is also easy to see that if the total weight of the missing groups exceeds $\alpha$, the quantile will be infinite, resulting in infinite sets.

\subsection{Optimal score function for generalized macro-coverage}\label{sec:optimal_score}

When we construct prediction sets, we generally have two objectives. First, we want the sets to satisfy our coverage criterion. Second, we want the sets to be as informative as possible, subject to the coverage constraint. In this section, we focus on the case where the weights $w(k)$ are fixed and do not depend on the calibration data. We propose a way to approximately achieve the theoretically optimal sets, where we use ``optimal'' to mean minimizing expected set size subject to the chosen generalized macro-coverage objective. 
Let $p(y) = \P(Y = y)$ be the probability of class $y$ and define $\rho(k) = \sum_{y : g(y) = k} p(y)$ for $k \in \cK$ to be the probability that a label belongs to group $k$. The following proposition characterizes the optimal prediction set that satisfies a chosen generalized macro-coverage constraint.

\begin{proposition} [informal]\label{prop:theoretical_optimal_set}
    The solution to 
    \begin{align*}
        \min_{\cC: \cX \to 2^{\cY}} \E|\cC(X)| \quad \text{subject to } \mathrm{MacroCov}_{g,w}(\cC) \geq 1-\alpha
    \end{align*}
    is of the form 
    \begin{align} \label{eq:theoretical_optimal_set}
        \mathcal{C}^*_t(x) = \left\{y \in \mathcal{Y} : \frac{w(g(y))}{\rho(g(y))
        } \cdot p(y|x) \geq t\right\} \quad \text{for some threshold } t.
    \end{align}

\end{proposition}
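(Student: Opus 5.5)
Both the objective and the constraint are linear in the indicators $\mathbbm{1}\{y\in\cC(x)\}$, so the plan is to rewrite the problem in those indicators and solve it as a Neyman--Pearson / fractional-knapsack problem, in the spirit of \citet{sadinle2019least}. Let $P_X$ be the marginal of $X$. Then
\[
\E|\cC(X)| = \int \sum_{y\in\cY} \mathbbm{1}\{y\in\cC(x)\}\, dP_X(x).
\]
For the constraint, use $\P(Y\in\cC(X), g(Y)=k) = \int \sum_{y: g(y)=k} p(y\mid x)\mathbbm{1}\{y\in\cC(x)\}\,dP_X(x)$ and $\P(g(Y)=k)=\rho(k)$. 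This gives
\[
\mathrm{MacroCov}_{g,w}(\cC) = \int \sum_{y\in\cY} \frac{w(g(y))}{\rho(g(y))}\, p(y\mid x)\,\mathbbm{1}\{y\in\cC(x)\}\, dP_X(x).
\]
I will assume $\rho(k)>0$ for every $k$ with $w(k)>0$, so that the conditioning is well defined. Write $h(x,y) := \frac{w(g(y))}{\rho(g(y))} p(y\mid x)$. The problem then becomes: choose a set $A=\{(x,y): y\in\cC(x)\}\subseteq \cX\times\cY$ minimizing $\mu(A)$ subject to $\int_A h\, d\mu \ge 1-\alpha$. Here $\mu = P_X\otimes(\text{counting measure on }\cY)$. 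Each pair $(x,y)$ costs $1$ unit of size and contributes $h(x,y)$ units of coverage, so the efficient choice is to include the pairs with the largest ``coverage per unit size'' $h$ first. That greedy choice is exactly $\cC_t^*$.

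The key step is the Neyman--Pearson exchange argument. Pick $t$ such that $\int_{\{h\ge t\}} h\,d\mu = 1-\alpha$, and let $A^*=\{h\ge t\}$. Take any feasible $A$, so $\int_A h\,d\mu \ge 1-\alpha$. Since $(\mathbbm{1}_{A^*}-\mathbbm{1}_A)(h-t)\ge 0$ pointwise,
\[
t\,\bigl(\mu(A^*)-\mu(A)\bigr) \le \int (\mathbbm{1}_{A^*}-\mathbbm{1}_A)\,h\,d\mu = (1-\alpha) - \int_A h\,d\mu \le 0 .
\]
For $t>0$ this gives $\mu(A^*)\le\mu(A)$, so $\cC_t^*$ is optimal. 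The same inequality shows any other minimizer agrees with $A^*$ up to $\mu$-null sets and ties at $\{h=t\}$, which is the sense of ``of the form'' in the informal statement. Finiteness is not an issue because $\mu(\cX\times\cY)=|\cY|<\infty$.

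The main obstacle is choosing $t$, because the map $t\mapsto \int_{\{h\ge t\}} h\,d\mu$ is nonincreasing and right-continuous but may jump where $\{h=t\}$ has positive mass. There are three cases.

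\begin{enumerate}[leftmargin=2em]
\item If the distribution of $h$ under $\mu$ has no atoms, an exact $t$ exists and the argument above applies directly.
\item If there is an atom at the relevant level, I would either allow randomized inclusion on $\{h=t\}$, with the exchange argument unchanged, or state the result up to such boundary ties. This is why the proposition is labelled informal.
\item If $t=0$ is needed, i.e.\ $1-\alpha$ equals the total mass $\int h\,d\mu = \sum_k w(k)=1$, then including only the pairs with $h>0$ is optimal. Pairs with $h=0$ add size without adding coverage, so they are excluded.
\end{enumerate}

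Finally, I would note that when $w\equiv 1/|\cK|$ and $g$ is the identity, $h \propto p(y\mid x)/p(y)$, which recovers the prevalence-adjusted softmax score of \citet{ding2026conformal}.
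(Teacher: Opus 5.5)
Your proof is correct, but it takes a different, self-contained route. The paper never sets up the optimization directly. It rewrites $\mathrm{MacroCov}_{g,w}$ as a class-level weighted macro-coverage $\sum_{y}\tilde w(y)\,\P(Y\in\cC(X)\mid Y=y)$ with induced weights $\tilde w(y)=w(g(y))p(y)/\rho(g(y))$. It then cites Proposition~6 of \citet{ding2026conformal}, whose optimal score $\tilde w(y)\,p(y\mid x)/p(y)$ reduces to $w(g(y))\,p(y\mid x)/\rho(g(y))$. The formal version assumes a threshold attaining coverage exactly $1-\alpha$ and leaves the non-attainable case to a remark about randomization.

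You instead write size and coverage as linear functionals of $\mathbbm{1}\{y\in\cC(x)\}$ under $P_X\otimes(\text{counting})$ and prove optimality by the Neyman--Pearson exchange inequality. This is essentially the $m=1$ case of the Lagrangian argument the paper gives for the simultaneous-coverage proposition in Appendix~C:
\begin{itemize}
\item your $h$ equals $\tilde w_1(y)\,p(y\mid x)/p(y)$;
\item your inequality $(\mathbbm{1}_{A^*}-\mathbbm{1}_A)(h-t)\ge 0$ is pointwise minimization of the Lagrangian with multiplier $1/t$;
\item choosing $t$ with exact coverage is their complementary-slackness condition.
\end{itemize}

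The paper's reduction buys brevity and the conceptual point, reused later, that every $(g,w)$ objective is a class-weighted macro-coverage. Your route buys several things:
\begin{itemize}
\item it is self-contained;
\item it gives a uniqueness statement (minimizers agree off $\{h=t\}$ up to null sets);
\item it avoids the detour through $p(y)$, so you need only $\rho(k)>0$ rather than $p(y)>0$ for every class;
\item it handles the edge case $\alpha=0$ correctly. Your insistence on $t>0$ matters here: a nonpositive threshold gives $\cC\equiv\cY$, which is suboptimal when $\{h=0\}$ has positive mass. The paper's formal statement places no sign restriction on $t_\alpha$ and so does not exclude this.
\end{itemize}

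One minor slip: $t\mapsto\int_{\{h\ge t\}}h\,d\mu$ is left-continuous, not right-continuous, since it drops by $t\,\mu(\{h=t\})$ just after an atom. Nothing in your argument depends on this.
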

This result is obtained by showing that the $(g,w)$ macro-coverage constraint can be expressed as $(g_I, \tilde w)$ macro-coverage for the identity mapping $g_I(y) = y$ and weight function
\begin{align} \label{eq:induced_macrocov_weights}
    \tilde w(y) := \frac{w(g(y)) \cdot p(y)}{\rho(g(y))},
\end{align}
then applying Proposition 6 of \cite{ding2026conformal}. The formal statement of the result and its proof is given in Appendix \ref{app:proof_theoretical_optimal_set}.

Observe that we can rewrite \eqref{eq:theoretical_optimal_set} as
\begin{align*}
    \mathcal{C}^*_t(x) =  \left\{y \in \mathcal{Y} : s^*_{g,w}(x,y) \leq t \right\} \quad \text{for } s^*_{g,w}(x,y) = -\frac{w(g(y))}{\rho(g(y))
    } \cdot p(y|x).
\end{align*}
Thus, combining Proposition \ref{prop:theoretical_optimal_set} and Theorem \ref{thm:macrocov_guarantee}, we know that if we were able to use $s^*_{g,w}$ as our score function and construct our prediction set according to Algorithm \ref{alg:lwcp}, this would yield the smallest set with $(g,w)$ macro-coverage of at least $1-\alpha$. However, in practice, $s^*_{g,w}$ cannot be used since it depends on unknown population-level probabilities, so we instead substitute in plug-in estimates, yielding the score function
\begin{align} \label{eq:approx_optimal_score}
    \hat s_{g,w}(x,y) = -\frac{w(g(y))}{\widehat \rho(g(y))} \cdot \widehat p(y|x).
\end{align}
We can get $\widehat p(y|x)$ by training any probabilistic classifier. $\widehat \rho (g(y))$ can be obtained using the label distribution in the classifier training dataset.
Since Theorem \ref{thm:macrocov_guarantee} holds for any score function, it holds for $\hat s_{g,w}(x,y)$, so applying Algorithm \ref{alg:lwcp} with this score function yields the desired generalized macro-coverage guarantee while approximating the smallest set achieving this guarantee. 

\subsection{Simultaneous coverage guarantees} \label{sec:simultaneous_guarantees}

In some applications, we want prediction sets that satisfy multiple coverage guarantees simultaneously. 
For example, we might want to guarantee that our sets have high macro-coverage \emph{and} high marginal coverage. Formally, consider $m \geq 1$ constraints:  for $j=1,2,\dots, m$, $g_j$ is a grouping function, $\cH_j$ records the group membership of the calibration examples (as determined by $g_j$), $w_j$ is an $\cH_j$-measurable weighting function, and $\alpha_j$ is a miscoverage level. Our goal is to generate prediction sets that simultaneously satisfy 
\begin{align} \label{eq:simultaneous_coverage_goal}
\mathrm{MacroCov}_{g_j, w_j}(\cC \mid \cH_j) \geq 1 - \alpha_j \quad \text{for } j = 1, 2, \dots, m
\end{align}

\begin{proposition} \label{prop:simultaneous_cov_via_max}
Fix a score function $s$ and run Algorithm \ref{alg:lwcp} for each condition $j=1,2,\dots, m$ and extract $\widehat q_j := \Q_{1 - \alpha_j'}\!\left(\sum_{k \in \cK_j} w_j(k)\, P_k^j\right)$ from line $9$, where $\cK_j$ are the groups given by $g_j$ and $P_k^j$ and
$\alpha_j'$ are as computed by Algorithm~\ref{alg:lwcp}. Then the prediction set
\begin{align}
\bar \cC(x) := \{y \in \cY : s(x,y) \leq \max(\widehat q_1, \widehat q_2, \dots, \widehat q_m)\}
\end{align}
satisfies \eqref{eq:simultaneous_coverage_goal}.
\end{proposition}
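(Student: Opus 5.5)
The argument is a monotonicity reduction to the corollary to Theorem~\ref{thm:macrocov_guarantee}. For each $j$, let $\cC_j(x) := \{y \in \cY : s(x,y) \leq \widehat q_j\}$ be the set that Algorithm~\ref{alg:lwcp} outputs for condition $j$ alone. When $\alpha_j' < 0$ we use the convention $\widehat q_j = +\infty$, so that $\cC_j = \cY$, matching line 9 of the algorithm. The corollary, applied with grouping $g_j$, $\cH_j$-measurable weights $w_j$, and level $\alpha_j$, gives
\[
\operatorname{MacroCov}_{g_j,w_j}(\cC_j \mid \cH_j) \geq 1-\alpha_j
\]
for each $j$ separately.

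Next, set $\bar q := \max_j \widehat q_j \geq \widehat q_j$. Then $\cC_j(x) \subseteq \bar\cC(x)$ for every $x$ and every $j$, deterministically on every realization of the calibration data. Fix $j$ and a group $k \in \cK_j$. On the event $\{Y_{n+1} \in \cC_j(X_{n+1})\}$ we also have $Y_{n+1} \in \bar\cC(X_{n+1})$, so
\[
\P\bigl(Y_{n+1} \in \bar\cC(X_{n+1}) \mid g_j(Y_{n+1})=k, \cH_j\bigr) \geq \P\bigl(Y_{n+1} \in \cC_j(X_{n+1}) \mid g_j(Y_{n+1})=k, \cH_j\bigr)
\]
almost surely. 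This is monotonicity of conditional probability under event inclusion. The weights $w_j(k)$ are nonnegative and $\cH_j$-measurable, so multiplying by them and summing over $k$ preserves the inequality. This gives $\operatorname{MacroCov}_{g_j,w_j}(\bar\cC \mid \cH_j) \geq \operatorname{MacroCov}_{g_j,w_j}(\cC_j \mid \cH_j) \geq 1-\alpha_j$, and it holds for all $j$ simultaneously.

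The only step that needs care is the conditioning. $\bar\cC$ depends on the whole calibration set through all the $\widehat q_{j'}$, not just $\widehat q_j$, so one cannot reapply the exchangeability argument to $\bar\cC$ directly. The plan sidesteps this by never analyzing $\bar\cC$ on its own: the pointwise inclusion $\cC_j \subseteq \bar\cC$ holds surely, so the comparison of conditional probabilities given $\cH_j$ and $\{g_j(Y_{n+1})=k\}$ is valid whatever the dependence structure. Groups with zero conditioning probability contribute nothing, and for those the inequality holds by any fixed convention. No further quantile or union-bound argument is needed, because the guarantee for each $j$ is inherited from the corresponding single-constraint set.
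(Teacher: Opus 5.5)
Your proposal is correct and follows the paper's proof. Both arguments rest on the same observation: since every set thresholds the same score, $\bar\cC(x)=\bigcup_j \cC_j(x)\supseteq \cC_j(x)$, so each guarantee $\operatorname{MacroCov}_{g_j,w_j}(\cC_j\mid\cH_j)\ge 1-\alpha_j$ from Algorithm~\ref{alg:lwcp} carries over to $\bar\cC$ by monotonicity. Your explicit treatment of the $\cH_j$-conditioning and of the case $\alpha_j'<0$ (where $\widehat q_j=+\infty$) makes rigorous what the paper's short argument leaves implicit.
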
 

Proofs for results in this section are given in Appendix~\ref{app:oracle_score_simultaneous_cov}. 
Proposition \ref{prop:simultaneous_cov_via_max} says that we can produce sets that satisfy multiple generalized macro-coverage conditions by simply taking the max of their score thresholds. 
As with the single coverage constraint problem, we can motivate a choice of score function for the multiple constraint setting by deriving the form of the optimal prediction set. 

\begin{proposition}[informal]\label{prop:optimal_score_simultaneous_coverage}
The solution to 
    \begin{align*}
        \min_{\cC: \cX \to 2^{\cY}} \E|\cC(X)| \quad \mathrm{ s.t. } \quad \mathrm{MacroCov}_{g_j,w_j}(\cC) \geq 1-\alpha_j \quad \text{for all }j=1,2,\dots, m
    \end{align*}
    is of the form 
    \begin{align} \label{eq:theoretical_simultaneous_optimal_set}
        \mathcal{C}^*_t(x) = \left\{y \in \mathcal{Y} : 
\frac{p(y|x)}{p(y)} \cdot \sum_{j =1}^m \lambda_j \tilde w_j(y)\geq t\right\} \quad \text{for some  $\lambda_1, \lambda_2, \dots, \lambda_m$ and $t$},
    \end{align}
    where $\tilde w_j : \cY \to \R_{\geq 0}$ is the class-level weighting function induced by $(g_j, w_j)$, as defined in ~\eqref{eq:induced_macrocov_weights}.
\end{proposition}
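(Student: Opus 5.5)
We reduce every constraint to a linear functional of the inclusion probabilities and then solve the resulting linear program by Lagrangian duality, mirroring the argument behind Proposition~\ref{prop:theoretical_optimal_set}.

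\emph{Reduction to class-level weights.} We relax to randomized sets described by $\phi(x,y) = \P(y \in \cC(X) \mid X=x) \in [0,1]$. For each $j$, expanding the group-conditional coverage as a $p(y)/\rho_j(g_j(y))$-weighted average of class-conditional coverages gives
\begin{align*}
\mathrm{MacroCov}_{g_j,w_j}(\cC) = \sum_{y\in\cY} \tilde w_j(y)\, \P(Y\in\cC(X)\mid Y=y),
\end{align*}
where $\tilde w_j$ is defined as in \eqref{eq:induced_macrocov_weights}. Bayes' rule then gives
\begin{align*}
\P(Y\in\cC(X)\mid Y=y) = \E_X\!\left[\phi(X,y)\,\frac{p(y|X)}{p(y)}\right].
\end{align*}
Hence the $j$th constraint reads
\begin{align*}
\E_X\!\left[\sum_{y} \phi(X,y)\,\frac{p(y|X)}{p(y)}\,\tilde w_j(y)\right] \ge 1-\alpha_j .
\end{align*}
The objective is $\E_X[\sum_y \phi(X,y)]$, so the whole problem is a linear program in $\phi$ over the convex set of $[0,1]$-valued functions.

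\emph{Lagrangian and pointwise minimization.} Introduce multipliers $\mu_j \ge 0$ and form the Lagrangian
\begin{align*}
L(\phi,\mu) = \E_X\!\left[\sum_y \phi(X,y)\Bigl(1 - \frac{p(y|X)}{p(y)}\sum_j \mu_j \tilde w_j(y)\Bigr)\right] + \sum_j \mu_j(1-\alpha_j).
\end{align*}
For fixed $\mu$, this is minimized pointwise in $(x,y)$. We set $\phi(x,y)=1$ when $\frac{p(y|x)}{p(y)}\sum_j \mu_j\tilde w_j(y) > 1$, set $\phi(x,y)=0$ when this quantity is below $1$, and allow $\phi(x,y)$ to take any value in $[0,1]$ on the boundary. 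Writing $\lambda_j = \mu_j$ and $t=1$ (or rescaling with $\lambda_j=\mu_j/\sum_i\mu_i$ and $t = 1/\sum_i \mu_i$) yields sets of the form \eqref{eq:theoretical_simultaneous_optimal_set}, up to randomization on the tie set.

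\emph{Strong duality and the main obstacle.} The problem is linear, and it is feasible with slack whenever every $\alpha_j>0$, since taking $\phi\equiv 1$ gives coverage $1$ in every constraint. Slater's condition therefore holds, and so does strong duality: there exist $\mu^*\ge 0$ with which any optimal $\phi^*$ minimizes $L(\cdot,\mu^*)$ and satisfies complementary slackness. This step is the main obstacle to make rigorous, because the feasible set is an infinite-dimensional space of functions.

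The first approach is to apply a Lagrange duality theorem for convex programs in $L^\infty$ with finitely many constraints; finiteness of $m$ is what makes this tractable. The alternative is a direct exchange argument. Suppose $\phi$ is optimal but disagrees with the threshold rule on a set of positive measure. Moving inclusion mass from low-score pairs to high-score pairs, while keeping the vector of $m$ constraint values fixed, must then strictly reduce $\E|\cC|$, which is a contradiction.

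The tie set is handled as in the formal version of Proposition~\ref{prop:theoretical_optimal_set}, by allowing randomization or by assuming the tie set has measure zero. This gives the claimed form, with the caveat that $\lambda$ and $t$ are determined only implicitly by the $\alpha_j$.
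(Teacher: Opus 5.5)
Your main route is correct, but it proves the opposite direction from the paper. The paper's formal version (Proposition~\ref{prop:formal_optimal_score_simultaneous_coverage}) \emph{assumes} that some deterministic threshold set $\mathcal C_{\lambda,t}$ is feasible and satisfies complementary slackness. It then takes $\mu_j=\lambda_j/t$ and concludes by weak duality alone: for any feasible relaxed rule $c$, $\E|\mathcal C_{\lambda,t}(X)|=\mathcal L(c_{\lambda,t},\mu)\le\mathcal L(c,\mu)\le\E\sum_y c(X,y)$. This needs no Slater condition, no existence of an optimizer and no duality theorem. Because the certificate beats every randomized rule, it also covers deterministic sets even when $X$ has atoms. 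What it does not show is that suitable $(\lambda,t)$ exist, or that every optimum must have this form.

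Your strong-duality argument supplies exactly that necessity direction. Under $\alpha_j>0$, multipliers exist, and every optimal rule is a threshold rule up to null sets and ties. This is closer to the literal informal claim. The price is the caveats you already flag. First, the conclusion concerns the randomized relaxation; with atoms in $X$ the deterministic problem is knapsack-like, and its optima need not be thresholds unless the tie set for $\mu^*$ is null. Second, you need an optimum to exist, which weak-$*$ compactness of $\{0\le\phi\le 1\}$ in $L^\infty$ provides.

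The step you call the main obstacle is lighter than you suggest. With finitely many constraints, the multiplier theorem reduces to a separating-hyperplane argument in $\mathbb{R}^{m+1}$ on the convex set of achievable (objective, constraint) vectors, so no topology on the function space is needed. Your exchange-argument alternative, by contrast, is not self-contained for $m>1$. There is no canonical threshold rule to exchange toward until some $\lambda$ has been produced, and producing $\lambda$ is precisely what the separation step does.
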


We state a formal version of this proposition in
Appendix~\ref{app:oracle_score_simultaneous_cov} and provide the proof there.
In practice, to approximate the oracle set in
\eqref{eq:theoretical_simultaneous_optimal_set}, we can use plug-in estimates of
$p(y|x)$ and $p(y)$, as in Section~\ref{sec:optimal_score}. The additional
difficulty in the simultaneous-coverage setting is that the score also depends on
the vector $\lambda:= (\lambda_1,\ldots,\lambda_m)$.
However, in many applications, only a subset of the coverage constraints is active. For example, if we 
target both marginal coverage and macro-coverage at the same miscoverage level $\alpha$ in long-tailed classification problems, it is often the case that the macro-coverage constraint is the more demanding one. This because it places more emphasis on covering rare classes, which classifiers struggle more with.  In such cases, a procedure that attains the desired macro-coverage
level may also attain the marginal coverage target, so that the marginal coverage
constraint is inactive in the corresponding oracle problem. Note that for any inactive constraints, the corresponding $\lambda_j$ in Proposition~\ref{prop:optimal_score_simultaneous_coverage} will be zero. However, in general, we can set $\lambda$ using data separate from the calibration data (e.g., held-out data or the data used for classifier training) to do grid search over potential values and selecting the value that gives the smallest estimated average set size. 



\section{Experiments}

We perform experiments using real data to demonstrate that label-weighted conformal prediction achieves the macro-coverage objectives guaranteed by our theoretical results. We use the softmax score function, given by $s_{\softmax}(x,y) = -\hat p(y|x)$, which is also called the Least Ambiguous Classifier (LAC) score \citep{sadinle2019least}. We obtain $\hat p(y|x)$ from a fine-tuned ResNet-50 classifier initialized to ImageNet-pretrained weights and use the resulting softmax scores, following the procedure described in \cite{ding2026conformal}. 
We also use the approximately optimal score function $\hat s_{g,w}$ defined in \eqref{eq:approx_optimal_score}. Our primary evaluation metrics are macro-coverage ($\mathrm{MacroCov}$) and average set size ($\mathrm{AvgSize}$). We also compute marginal coverage ($\mathrm{MarginalCov}$). Code is available at \url{https://github.com/tiffanyding/macro-guarantees}.

\paragraph{Methods.} We compare three methods. The first baseline method is \standard conformal prediction, which constructs prediction sets as $$\mathcal{C}(x) = \left\{y : s(x,y) \leq \Q_{1-\alpha}\left(\frac{1}{n+1} \sum_{i=1}^n \delta_{S_i} + \frac{1}{n+1}\delta_{\infty}\right)\right\}$$ and guarantees $1-\alpha$ marginal coverage. The second baseline is \classwise conformal prediction, which constructs prediction sets as $$\mathcal{C}(x) = \left\{y : s(x,y) \leq \Q_{1-\alpha}\left(\frac{1}{N_y+1} \sum_{i=1}^{N_y} \delta_{S_i^{(y)}} + \frac{1}{N_y+1}\delta_{\infty}\right)\right\},$$ where $N_y$ and $S_i^{(y)}$ are defined analogously to $N_k$ and $S_i^{(k)}$ in Section \ref{sec:method}, for the special case where $\cK = \cY$ (each class is its own group). \classwise guarantees $1-\alpha$ class-conditional coverage for all classes. We compare these two baselines against \labelw conformal prediction, as described in Algorithm \ref{alg:lwcp}, for group and weight functions $(g,w)$ described below.

\paragraph{Data.} We use two large-scale image classification datasets: the plant species dataset Pl@ntNet-300K \citep{garcin2021pl} and the 2018 version of iNaturalist \citep{van2015building}, which contains images of plants, animals, insects, and more. We choose these datasets as a test bed due to their high class imbalance, which is the setting in which macro-coverage is most relevant. 
To ensure reliable evaluation of macro-coverage, which requires computing class-conditional coverage, we use the truncated versions of these datasets created in \cite{ding2026conformal} by filtering out classes with too few examples for testing. 
This leaves 330 Pl@ntNet classes and 857 iNaturalist classes. 
Pl@ntNet is highly skewed, with the most common class being 100 times more common than the rarest class. iNaturalist is less skewed, with a ratio of 10. 
To compute standard errors, we combine the provided calibration and test splits (resulting in 98,061 total examples for \truncplant and 50,906 for \truncinat) and do our own random splitting into calibration and test datasets, where each example is assigned to the calibration dataset with probability 0.1. All metrics are computed using 20 random seeds. 


\subsection{Main results: Achieving macro-coverage}

The goal of this experiment is to construct prediction sets with $\mathrm{MacroCov} \geq 1-\alpha$ for a user-chosen $\alpha \in [0,1]$. 
To target this goal, we run \labelw with the identity grouping function $g(y) = y$ and uniform weights $w(y) = 1/|\cY|$ for all $y$. 
Note that for this $g$ and $w$, the score $\hat s_{g,w}$ is equivalent to the 
prevalence-adjusted softmax score $s(x,y) = - \hat p(y|x) / \hat p(y)$ proposed in \cite{ding2026conformal} for approximating the optimal score function for targeting macro-coverage.

As a high-level summary, this experiment empirically validates our two main theoretical results: First, that label-weighted conformal prediction can be used to achieve $1-\alpha$ macro-coverage for a user-chosen $\alpha$, and second, that $\hat s_{g,w}$ achieves the smallest set size at a given macro-coverage level. 
Tables~\ref{tab:plantnet-trunc_main}--\ref{tab:inaturalist-trunc_main}  show the result of running the various conformal prediction methods on \truncplant and \truncinat at levels $\alpha=0.05$ and $\alpha=0.1$. In the $\mathrm{MacroCov}$ column, we bold entries with macro-coverage at least $1-\alpha$, or within one standard error of it. In the $\mathrm{AvgSize}$ column, we bold the minimum average size among rows achieving the desired macro-coverage. We report the standard errors as subscripts. We use this presentation style throughout the rest of the paper.

We observe that $\standard$, which guarantees $1-\alpha$ marginal coverage, does not in general achieve $1-\alpha$ macro-coverage. 
$\classwise$ results are only reported for $s_{\softmax}$ because $\classwise$ is invariant to class-dependent adjustments to score functions, so $s_{\softmax}$ and $\hat s_{g,w}$ produce identical results (for any $g$ and $w$). 
$\classwise$ does achieve $1-\alpha$ macro-coverage, but it overshoots. 
Although overshooting the coverage target is not itself undesirable, here it comes at the cost of extremely large prediction sets.
The cause of this overshooting is that $\classwise$ guarantees $1-\alpha$ class-conditional coverage, which is stronger than the macro-coverage condition we ask for. By contrast, $\labelw$ is able to achieve the desired macro-coverage without overshooting. Furthermore, comparing the rows with bolded $\mathrm{MacroCov}$ entries (denoting that $1-\alpha$ macro-coverage is achieved), we see that \labelw~with $\hat s_{g,w}$ has the smallest average set size, supporting the theoretical results in Section \ref{sec:optimal_score}.

\begin{table}[h]
\centering
\setlength{\tabcolsep}{2.3pt}
\caption{Targeting macro-coverage on \truncplant}\label{tab:plantnet-trunc_main}
\begin{tabular}{ll rrr rrr}
\toprule
 & & \multicolumn{3}{c}{\tcbox[on line, colback=blue!10, colframe=blue!10, arc=3pt, boxsep=1pt, left=3pt, right=3pt, top=1pt, bottom=1pt]{ $\mathsf{Goal}: \mathrm{MacroCov} \geq 0.9$}} & \multicolumn{3}{c}{\tcbox[on line, colback=blue!10, colframe=blue!10, arc=3pt, boxsep=1pt, left=3pt, right=3pt, top=1pt, bottom=1pt]{$\mathsf{Goal}: \mathrm{MacroCov} \geq 0.95$}} \\
\cmidrule(lr){3-5} \cmidrule(lr){6-8}
Method & Score & \small{$\mathrm{MarginalCov}$} & \small{$\mathrm{MacroCov}$} & \small{$\mathrm{AvgSize}$} & \small{$\mathrm{MarginalCov}$} & \small{$\mathrm{MacroCov}$} & \small{$\mathrm{AvgSize}$} \\
\midrule
\multirow{2}{*}{\standard} & $s_{\softmax}$ & $0.901_{\pm 0.001}$ & $0.861_{\pm 0.001}$ & $2.8_{\pm 0.0}$ & $0.951_{\pm 0.001}$ & $0.929_{\pm 0.001}$ & $5.9_{\pm 0.1}$ \\
 & $\hat s_{g,w}$ & $0.900_{\pm 0.001}$ & $0.886_{\pm 0.001}$ & $2.2_{\pm 0.0}$ & $0.950_{\pm 0.001}$ & $0.939_{\pm 0.001}$ & $3.7_{\pm 0.0}$ \\
\midrule
\classwise & $s_{\softmax}$ & $0.940_{\pm 0.001}$ & $\mathbf{0.940}_{\pm 0.001}$ & $58.1_{\pm 1.4}$ & $0.987_{\pm 0.000}$ & $\mathbf{0.992}_{\pm 0.000}$ & $263.4_{\pm 0.9}$ \\
\midrule
\multirow{2}{*}{\makecell[l]{\textsc{Label-}\\\textsc{weighted}}} & $s_{\softmax}$ & $0.930_{\pm 0.001}$ & $\mathbf{0.901}_{\pm 0.001}$ & $4.0_{\pm 0.0}$ & $0.966_{\pm 0.001}$ & $\mathbf{0.951}_{\pm 0.001}$ & $9.2_{\pm 0.1}$ \\
 & $\hat s_{g,w}$ & $0.915_{\pm 0.001}$ & $\mathbf{0.901}_{\pm 0.001}$ & $\mathbf{2.4}_{\pm 0.0}$ & $0.959_{\pm 0.000}$ & $\mathbf{0.949}_{\pm 0.001}$ & $\mathbf{4.4}_{\pm 0.0}$ \\
\bottomrule
\end{tabular}
\end{table}

\begin{table}[h]
\centering
\setlength{\tabcolsep}{2.3pt}
\caption{Targeting macro-coverage on \truncinat}\label{tab:inaturalist-trunc_main}
\begin{tabular}{ll rrr rrr}
\toprule
 & & \multicolumn{3}{c}{\tcbox[on line, colback=blue!10, colframe=blue!10, arc=3pt, boxsep=1pt, left=3pt, right=3pt, top=1pt, bottom=1pt]{ $\mathsf{Goal}: \mathrm{MacroCov} \geq 0.9$}} & \multicolumn{3}{c}{\tcbox[on line, colback=blue!10, colframe=blue!10, arc=3pt, boxsep=1pt, left=3pt, right=3pt, top=1pt, bottom=1pt]{$\mathsf{Goal}: \mathrm{MacroCov} \geq 0.95$}} \\
\cmidrule(lr){3-5} \cmidrule(lr){6-8}
Method & Score & \small{$\mathrm{MarginalCov}$} & \small{$\mathrm{MacroCov}$} & \small{$\mathrm{AvgSize}$} & \small{$\mathrm{MarginalCov}$} & \small{$\mathrm{MacroCov}$} & \small{$\mathrm{AvgSize}$} \\
\midrule
\multirow{2}{*}{\standard} & $s_{\softmax}$ & $0.901_{\pm 0.001}$ & $0.894_{\pm 0.001}$ & $10.1_{\pm 0.1}$ & $0.950_{\pm 0.001}$ & $0.946_{\pm 0.001}$ & $26.9_{\pm 0.3}$ \\
 & $\hat s_{g,w}$ & $0.901_{\pm 0.001}$ & $0.898_{\pm 0.001}$ & $7.2_{\pm 0.1}$ & $0.950_{\pm 0.001}$ & $\mathbf{0.949}_{\pm 0.001}$ & $\mathbf{18.6}_{\pm 0.3}$ \\
\midrule
\classwise & $s_{\softmax}$ & $0.941_{\pm 0.001}$ & $\mathbf{0.940}_{\pm 0.001}$ & $206.4_{\pm 3.0}$ & $0.998_{\pm 0.000}$ & $\mathbf{0.998}_{\pm 0.000}$ & $826.9_{\pm 1.4}$ \\
\midrule
\multirow{2}{*}{\makecell[l]{\textsc{Label-}\\\textsc{weighted}}} & $s_{\softmax}$ & $0.908_{\pm 0.001}$ & $\mathbf{0.901}_{\pm 0.001}$ & $11.3_{\pm 0.1}$ & $0.955_{\pm 0.001}$ & $\mathbf{0.951}_{\pm 0.001}$ & $30.2_{\pm 0.4}$ \\
 & $\hat s_{g,w}$ & $0.903_{\pm 0.001}$ & $\mathbf{0.901}_{\pm 0.001}$ & $\mathbf{7.5}_{\pm 0.1}$ & $0.953_{\pm 0.000}$ & $\mathbf{0.951}_{\pm 0.001}$ & $19.6_{\pm 0.2}$ \\
\bottomrule
\end{tabular}
\end{table}


$\standard$ achieves closer to the desired macro-coverage on \truncinat~compared to \truncplant. This is likely because the class distribution of \truncinat, although imbalanced, is closer to uniform than \truncplant. Recall that for a perfectly uniform class distribution, $\mathrm{MarginalCov}$ equals $\mathrm{MacroCov}$, so $\standard$, by targeting $\mathrm{MarginalCov}$, can also indirectly do well on $\mathrm{MacroCov}$ when the class distribution is not too imbalanced.

\subsection{Results beyond macro-coverage} \label{sec:results_beyond_macrocov}

We now demonstrate that label-weighted conformal prediction can be used to produce prediction sets that satisfy generalized macro-coverage conditions. 
We consider two notions of generalized macro-coverage motivated by plant identification. 

\paragraph{(a) Tail-focused macro-coverage.} For biodiversity monitoring purposes, it is particularly important to correctly identify rare plant species. Motivated by this application, in this experiment, we seek to produce \truncplant prediction sets with high tail-focused macro-coverage, as defined in Section \ref{sec:generalized_macrocov}. That is, we seek $\mathrm{MacroCov}_{\mathrm{tail}} \geq 1-\alpha$. To construct $\cY_{\mathrm{tail}}$, we take the 33 classes (10\% of the 330 total classes) with the fewest examples. We upweight coverage on these classes by $\lambda = 10$. 

The left panel of Table \ref{tab:plantnet-trunc_generalized_alpha=0.1} shows results for $\alpha = 0.1$. We observe that $\labelw$ successfully achieves the desired $1-\alpha$ tail-focused macro-coverage level and that applying the optimal $\hat s_{g,w}$ score results in the smallest average set size, by many factors, compared to the commonly used softmax score.  
Results for $\alpha = 0.05$ are qualitatively similar (see Appendix \ref{sec:additional_experiments_APPENDIX}).

\paragraph{(b) Genus-level macro-coverage.} If we are more interested in equalizing coverage at the genus level rather than the species level, we can aim to construct \truncplant prediction sets satisfying $\mathrm{GenusMacroCov} \geq 1-\alpha$ (see Section \ref{sec:generalized_macrocov}). The right panel of Table \ref{tab:plantnet-trunc_generalized_alpha=0.1} shows the results for $\alpha = 0.1$, with $\alpha=0.05$ results deferred to Appendix \ref{sec:additional_experiments_APPENDIX}.
We observe that $\labelw$ achieves the desired $\mathrm{GenusMacroCov}$ level.
$\standard$ with the $\hat{s}_{g,w}$ score also manages to achieve $1-\alpha$ $\mathrm{GenusMacroCov}$, but it overshoots slightly, resulting in a suboptimal average set size. Conversely, $\labelw$ achieves $1-\alpha$ $\mathrm{GenusMacroCov}$ almost exactly and yields the smallest average set size.

\begin{table}[h]
\centering
\setlength{\tabcolsep}{2.2pt}
\caption{Targeting generalized macro-coverage on \truncplant\ at $\alpha = 0.1$. The left panel targets tail-focused macro-coverage; the right panel targets genus-level macro-coverage.}\label{tab:plantnet-trunc_generalized_alpha=0.1}
\begin{tabular}{ll rrr rrr}
\toprule
 & & \multicolumn{3}{c}{\tcbox[on line, colback=green!10, colframe=green!10, arc=3pt, boxsep=1pt, left=3pt, right=3pt, top=1pt, bottom=1pt]{$\mathsf{Goal}: \mathrm{MacroCov}_{\mathrm{tail}} \geq 0.9$}} & \multicolumn{3}{c}{\tcbox[on line, colback=orange!15, colframe=orange!15, arc=3pt, boxsep=1pt, left=3pt, right=3pt, top=1pt, bottom=1pt]{$\mathsf{Goal}: \mathrm{GenusMacroCov} \geq 0.9$}} \\
\cmidrule(lr){3-5} \cmidrule(lr){6-8}
Method & Score & \small{$\mathrm{MarginalCov}$} & \small{$\mathrm{MacroCov}_{\mathrm{tail}}$} & \small{$\mathrm{AvgSize}$} & \small{$\mathrm{MarginalCov}$} & \tiny{$\mathrm{GenusMacroCov}$} & \small{$\mathrm{AvgSize}$} \\
\midrule
\multirow{2}{*}{\standard} & $s_{\softmax}$ & $0.901_{\pm 0.001}$ & $0.656_{\pm 0.002}$ & $2.8_{\pm 0.0}$ & $0.901_{\pm 0.001}$ & $0.883_{\pm 0.001}$ & $2.8_{\pm 0.0}$ \\
 & $\hat{s}_{g,w}$ & $0.900_{\pm 0.001}$ & $0.849_{\pm 0.001}$ & $2.3_{\pm 0.0}$ & $0.900_{\pm 0.001}$ & $\mathbf{0.941}_{\pm 0.000}$ & $2.8_{\pm 0.0}$ \\
\midrule
\classwise & $s_{\softmax}$ & $0.940_{\pm 0.001}$ & $\mathbf{0.945}_{\pm 0.001}$ & $58.1_{\pm 1.4}$ & $0.941_{\pm 0.001}$ & $\mathbf{0.949}_{\pm 0.001}$ & $10.7_{\pm 0.2}$ \\
\midrule
\multirow{2}{*}{\makecell[l]{\textsc{Label-}\\\textsc{weighted}}} & $s_{\softmax}$ & $0.984_{\pm 0.000}$ & $\mathbf{0.901}_{\pm 0.002}$ & $24.3_{\pm 0.9}$ & $0.919_{\pm 0.001}$ & $\mathbf{0.902}_{\pm 0.001}$ & $3.4_{\pm 0.0}$ \\
 & $\hat{s}_{g,w}$ & $0.949_{\pm 0.001}$ & $\mathbf{0.899}_{\pm 0.002}$ & $\mathbf{3.8}_{\pm 0.1}$ & $0.764_{\pm 0.002}$ & $\mathbf{0.902}_{\pm 0.001}$ & $\mathbf{1.8}_{\pm 0.0}$ \\
\bottomrule
\end{tabular}
\end{table}

\section{Discussion}

We propose label-weighted conformal prediction as a way to produce prediction sets with finite-sample guarantees of (generalized) macro-coverage. 
Rather than defaulting to the marginal guarantee of standard conformal, which implicitly assumes that the importance of covering a class $y$ is given by its prevalence, our work expands the scope of distribution-free guarantees that are achievable in classification settings.
Using our framework, practitioners can precisely specify how much coverage for each class matters, define an aggregated coverage notion that respects the relative importances, then produce prediction sets with a guarantee that their customized coverage is at least $1-\alpha$. Furthermore, using our score function proposal means that these sets will be approximately optimal, in the sense of set size.

\emph{Limitations and future work.}
Conformal prediction methods, including ours, have an inherent tradeoff between coverage and set size, which worsens as data becomes more limited. 
As the number of calibration examples of the smallest class (or, more generally, group) shrinks, the size of the $\alpha$ correction in Algorithm \ref{alg:lwcp} grows, causing prediction set size to increase with it. However, the amount of increase is controlled by the weight assigned to the smallest group. Since our theory allows us to adjust the weight assigned to a group after seeing its calibration count, this allows us to avoid producing uninformative sets. 
Furthermore, the adjustment our method applies is favorable compared to that required by class-conditional conformal. Whereas the latter applies an adjustment for \emph{every} class, our method only applies an adjustment for the ``worst class'' (the class with the highest weight, normalized by the number of calibration examples).
Directions for future work include
exploring practical ways of simultaneously achieving guaranteed macro-coverage and feature-conditional coverage and
operationalizing our framework for real-world classification tasks, such as the plant identification app from which the \truncplant dataset used in our experiments is derived.

\bibliographystyle{plainnat}
\bibliography{ref}  

\newpage
\appendix

\section{Proof of Theorem~\ref{thm:macrocov_guarantee}}\label{app:proof_validity}

\begin{proof}
Conditional on $\mathcal H$,
the group memberships $g(Y_1),\ldots,g(Y_n)$, the counts $N_k$, the set
$\mathcal K_+$, and the weights $w(k)$ are fixed.

Let
\[
    \Delta
    :=
    \max_{k\in\mathcal K_+}\frac{w(k)}{N_k}.
\]
If $1-\alpha-\Delta\le 0$, then the result is immediate because coverage is
nonnegative. Thus, assume $1-\alpha-\Delta>0$.

We first prove the result in the case where all groups are observed, so that
$\mathcal K_+=\mathcal K$.

For each $k\in\mathcal K$, let
\[
    (X_0^{(k)},Y_0^{(k)})
\]
be a``ghost" sample from the distribution of $(X,Y)\mid g(Y) = k$ that is independent of the calibration data,
and define
\[
    S_0^{(k)} := s(X_0^{(k)},Y_0^{(k)}).
\]
Since $\mathcal K_+=\mathcal K$, the weighted empirical score distribution is
\[
    \widehat P_{g,w}
    =
    \sum_{k\in\mathcal K}
    w(k)\frac{1}{N_k}\sum_{j=1}^{N_k}\delta_{S_j^{(k)}}.
\]
By the definition of $(g,w)$ macro-coverage,
\[
    \operatorname{MacroCov}_{g,w}(\mathcal C\mid\mathcal H)
    =
    \sum_{k\in\mathcal K}
    w(k)
    \mathbb P
    \left(
        S_0^{(k)}
        \le
        \Q_{1-\alpha}(\widehat P_{g,w})
        \mid
        \mathcal H
    \right).
\]

Fix $k\in\mathcal K$ and $j\in\{1,\ldots,N_k\}$. Define the ghost-replaced
weighted empirical distribution
\[
    \widehat P_{g,w}^{\,k,j}
    :=
    \widehat P_{g,w}
    +
    \frac{w(k)}{N_k}
    \left(
        \delta_{S_0^{(k)}}-\delta_{S_j^{(k)}}
    \right).
\]
That is, $\widehat P_{g,w}^{\,k,j}$ is obtained from $\widehat P_{g,w}$ by replacing
the $j$th calibration score in group $k$ by the independent ghost score from the
same group.

Conditional on $\mathcal H$, the scores
\[
    S_0^{(k)},S_1^{(k)},\ldots,S_{N_k}^{(k)}
\]
are exchangeable. Therefore,
\[
    \mathbb P
    \left(
        S_0^{(k)}
        \le
        \Q_{1-\alpha}(\widehat P_{g,w})
        \mid
        \mathcal H
    \right)
    =
    \mathbb P
    \left(
        S_j^{(k)}
        \le
        \Q_{1-\alpha}(\widehat P_{g,w}^{\,k,j})
        \mid
        \mathcal H
    \right).
\]
Averaging over $j=1,\ldots,N_k$ and summing over $k\in\mathcal K$ gives
\begin{align}
    \operatorname{MacroCov}_{g,w}(\mathcal C\mid\mathcal H)
    &=
    \sum_{k\in\mathcal K}
    w(k)\frac{1}{N_k}
    \sum_{j=1}^{N_k}
    \mathbb P
    \left(
        S_j^{(k)}
        \le
        \Q_{1-\alpha}(\widehat P_{g,w}^{\,k,j})
        \mid
        \mathcal H
    \right).
\label{eq:all_observed_exchangeability_step}
\end{align}

For every $k\in\mathcal K$ and $j\in\{1,\ldots,N_k\}$, the distributions
$\widehat P_{g,w}$ and $\widehat P_{g,w}^{\,k,j}$ differ only by moving mass
$w(k)/N_k$ from $S_j^{(k)}$ to $S_0^{(k)}$. Hence
\[
    d_{\mathrm{TV}}
    \left(
        \widehat P_{g,w},
        \widehat P_{g,w}^{\,k,j}
    \right)
    \le
    \frac{w(k)}{N_k}
    \le
    \Delta.
\]
Therefore,
\[
    \Q_{1-\alpha}(\widehat P_{g,w}^{\,k,j})
    \ge
    \Q_{1-\alpha-\Delta}(\widehat P_{g,w}).
\]
Using this in \eqref{eq:all_observed_exchangeability_step},
\begin{align*}
    \operatorname{MacroCov}_{g,w}(\mathcal C\mid\mathcal H)
    &\ge
    \sum_{k\in\mathcal K}
    w(k)\frac{1}{N_k}
    \sum_{j=1}^{N_k}
    \mathbb P
    \left(
        S_j^{(k)}
        \le
        \Q_{1-\alpha-\Delta}(\widehat P_{g,w})
        \mid
        \mathcal H
    \right) \\
    &=
    \mathbb E
    \left[
    \sum_{k\in\mathcal K}
    w(k)\frac{1}{N_k}
    \sum_{j=1}^{N_k}
    \mathbf 1
    \left\{
        S_j^{(k)}
        \le
        \Q_{1-\alpha-\Delta}(\widehat P_{g,w})
    \right\}
    \,\middle|\,
    \mathcal H
    \right].
\end{align*}
By the definition of the weighted empirical distribution and of the quantile,
\[
    \sum_{k\in\mathcal K}
    w(k)\frac{1}{N_k}
    \sum_{j=1}^{N_k}
    \mathbf 1
    \left\{
        S_j^{(k)}
        \le
        \Q_{1-\alpha-\Delta}(\widehat P_{g,w})
    \right\}
    \ge
    1-\alpha-\Delta
\]
almost surely. Therefore, in the all-observed case,
\[
    \operatorname{MacroCov}_{g,w}(\mathcal C\mid\mathcal H)
    \ge
    1-\alpha-\Delta.
\]

We now handle the general case, where some groups may have $N_k=0$.
Let
\[
    W_+ := \sum_{k\in\mathcal K_+} w(k)
\]
be the total weight assigned to observed groups.

If $\Q_{1-\alpha}(\widehat P_{g,w})=\infty$, then  $\mathcal \cC(x)=\mathcal Y$ for all $x\in\mathcal X$
and the macro-coverage guarantee is satisfied trivially. 

If $\Q_{1-\alpha}(\widehat P_{g,w})<\infty$, then 
we must have $W_+>0$. Define normalized weights on the observed groups by
\[
    \widetilde w(k)
    :=
    \frac{w(k)}{W_+}
    \qquad
    \text{for } k\in\mathcal K_+,
\]
and define the observed-groups weighted empirical score distribution
\[
    \widehat P_{+}
    :=
    \sum_{k\in\mathcal K_+}
    \widetilde w(k)
    \frac{1}{N_k}
    \sum_{j=1}^{N_k}
    \delta_{S_j^{(k)}}.
\]
By construction,
\[
    \widehat P_{g,w}
    =
    W_+\widehat P_{+}
    +
    (1-W_+)\delta_\infty \quad \text{ and } \quad 
    \Q_{1-\alpha}(\widehat P_{g,w})
    =
    \Q_{\frac{1-\alpha}{W_+}}(\widehat P_{+}).
\]
Let
\[
    1-\widetilde\alpha
    :=
    \frac{1-\alpha}{W_+},
    \qquad
    \text{equivalently}
    \qquad
    \widetilde\alpha
    =
    1-\frac{1-\alpha}{W_+}.
\]
Applying the all-observed result above to the group set $\mathcal K_+$, the weights
$\widetilde w(k)$, and the miscoverage level $\widetilde\alpha$, we obtain
\[
    \sum_{k\in\mathcal K_+}
    \widetilde w(k)
    \mathbb P
    \left(
        S_0^{(k)}
        \le
        \Q_{\frac{1-\alpha}{W_+}}(\widehat P_{+})
        \mid
        \mathcal H
    \right)
    \ge
    \frac{1-\alpha}{W_+}
    -
    \max_{k\in\mathcal K_+}
    \frac{\widetilde w(k)}{N_k}.
\]
Using
\[
    Q_{\frac{1-\alpha}{W_+}}(\widehat P_+)
    =
    \Q_{1-\alpha}(\widehat P_{g,w}),
\]
and multiplying both sides by $W_+$ gives
\[
    \sum_{k\in\mathcal K_+}
    w(k)
    \mathbb P
    \left(
        S_0^{(k)}
        \le
        \Q_{1-\alpha}(\widehat P_{g,w})
        \mid
        \mathcal H
    \right)
    \ge
    1-\alpha
    -
    W_+ \cdot
    \max_{k\in\mathcal K_+}
    \frac{\widetilde w(k)}{N_k}.
\]
Since
\[
    W_+ \cdot
    \max_{k\in\mathcal K_+}
    \frac{\widetilde w(k)}{N_k}
    =
    \max_{k\in\mathcal K_+}
    \frac{w(k)}{N_k}
    =
    \Delta,
\]
we have
\[
    \sum_{k\in\mathcal K_+}
    w(k)
    \mathbb P
    \left(
        S_0^{(k)}
        \le
        \Q_{1-\alpha}(\widehat P_{g,w})
        \mid
        \mathcal H
    \right)
    \ge
    1-\alpha-\Delta.
\]
Finally,
\begin{align*}
    \operatorname{MacroCov}_{g,w}(\mathcal C\mid\mathcal H)
    &=
    \sum_{k\in\mathcal K}
    w(k)
    \mathbb P
    \left(
        S_0^{(k)}
        \le
        \Q_{1-\alpha}(\widehat P_{g,w})
        \mid
        \mathcal H
    \right) \\
    &\ge
    \sum_{k\in\mathcal K_+}
    w(k)
    \mathbb P
    \left(
        S_0^{(k)}
        \le
        \Q_{1-\alpha}(\widehat P_{g,w})
        \mid
        \mathcal H
    \right) \\
    &\ge
    1-\alpha-\Delta.
\end{align*}
This proves the theorem.
\end{proof}

\section{Proof of Proposition~\ref{prop:theoretical_optimal_set}} \label{app:proof_theoretical_optimal_set}

\begin{proposition}[Formal version of Proposition \ref{prop:theoretical_optimal_set}]
For $t \in \R$, define 
\begin{align}\label{al:optimal_set_macro}
    \tilde \cC_{t}(x) = \left\{ y \in \cY : \tilde{s}(x,y) \geq t\right\} \quad \text{where}\quad  \tilde{s}(x,y)=\frac{w(g(y))}{\rho(g(y))} \cdot p(y|x).
\end{align}
 If there exists $t_{\alpha}$ such that $\mathrm{MacroCov}_{g,w}(\tilde \cC_{t_{\alpha}}) = 1-\alpha$, then $\tilde \cC_{t_{\alpha}}$ is the solution to 
    \begin{align} 
        \min_{\cC : \mathcal{X} \mapsto 2^\mathcal{Y}} \E[|\cC(X)|] \quad \mathrm{s.t.} \quad \mathrm{MacroCov}_{g,w}(\cC) \geq 1-\alpha.
    \end{align}

\begin{remark}
    If there does not exist
    $t_\alpha$ satisfying the equality exactly, the optimal set is still of the thresholded form above but must be combined with randomization to achieve the optimal solution with exact $(g,w)$ macro-coverage of $1-\alpha$. \citet[Theorem~6.1]{shao2008mathematical} can be used to characterize such a randomized solution. 
\end{remark}
    
\end{proposition}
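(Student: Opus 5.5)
The plan is to rewrite $(g,w)$ macro-coverage as a linear functional of the indicators $\mathbf 1\{y\in\cC(x)\}$ and then run a Neyman--Pearson / Lagrangian argument. That argument shows that thresholding the ratio of ``coverage gain'' to ``size cost'' is optimal.

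\textbf{Step 1 (linearize the objective and constraint).} For any (deterministic) $\cC$,
\[
\mathbb P\bigl(Y\in\cC(X)\mid g(Y)=k\bigr)=\frac{1}{\rho(k)}\sum_{y:g(y)=k}\E_X\bigl[p(y|X)\mathbf 1\{y\in\cC(X)\}\bigr].
\]
Hence
\[
\mathrm{MacroCov}_{g,w}(\cC)=\E_X\Bigl[\sum_{y\in\cY}\tilde s(X,y)\,\mathbf 1\{y\in\cC(X)\}\Bigr],
\qquad
\E|\cC(X)|=\E_X\Bigl[\sum_{y\in\cY}\mathbf 1\{y\in\cC(X)\}\Bigr].
\]
This is exactly the identity behind the induced weights $\tilde w$ in \eqref{eq:induced_macrocov_weights}: $\tilde s(x,y)=\tilde w(y)p(y|x)/p(y)$. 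Groups with $\rho(k)=0$ have undefined conditional coverage, so I will assume $\rho(k)>0$ for all $k$, or else drop those groups.

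\textbf{Step 2 (pointwise comparison).} Let $\cC$ be any feasible set-valued map, and write $\tilde\cC=\tilde\cC_{t_\alpha}$. For every $(x,y)$,
\[
\bigl(\mathbf 1\{y\in\tilde\cC(x)\}-\mathbf 1\{y\in\cC(x)\}\bigr)\bigl(\tilde s(x,y)-t_\alpha\bigr)\ge 0,
\]
because the first factor is positive only when $\tilde s\ge t_\alpha$ and negative only when $\tilde s<t_\alpha$. Summing over $y$ and taking expectations over $X$ gives
\[
\mathrm{MacroCov}_{g,w}(\tilde\cC)-\mathrm{MacroCov}_{g,w}(\cC)\ge t_\alpha\bigl(\E|\tilde\cC(X)|-\E|\cC(X)|\bigr).
\]
The left side is $(1-\alpha)-\mathrm{MacroCov}_{g,w}(\cC)\le 0$ by feasibility of $\cC$. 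If $t_\alpha>0$, this yields $\E|\tilde\cC(X)|\le\E|\cC(X)|$. Since $\tilde\cC$ is feasible by hypothesis, it is optimal. Taking expectations is harmless because $\cY$ is finite and all quantities are bounded.

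\textbf{Step 3 (sign of $t_\alpha$; the main delicate point).} The case $t_\alpha\le 0$ needs separate handling. Since $\tilde s\ge 0$, a threshold $t_\alpha\le0$ gives $\tilde\cC(x)=\cY$, so $\mathrm{MacroCov}_{g,w}=1$ and $\alpha=0$. If $t_\alpha=0$ I may replace it with the smallest positive threshold that still includes every $y$ with $\tilde s>0$. Labels with $\tilde s(x,y)=0$ contribute nothing to coverage, so this replacement changes neither coverage nor optimality, and then Step 2 applies. If $\alpha=0$ and every label has positive $\tilde s$, then $\cY$ is forced pointwise a.s., and the claim is trivial.

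Alternatively, one can cite Proposition 6 of \cite{ding2026conformal} applied to $(g_I,\tilde w)$ after Step 1, as the paper suggests. I would still present the direct argument above, since it is short and self-contained. I expect the only real obstacle to be the bookkeeping for these edge cases: ties at the threshold and zero-probability groups. Ties are already absorbed by the $\ge$ in the definition of $\tilde\cC_t$ together with the equality hypothesis.
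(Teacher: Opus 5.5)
Your Steps 1--2 are correct, and they give a more self-contained proof than the paper's. The paper carries out only your Step 1, in the form $\mathrm{MacroCov}_{g,w}=\sum_y \tilde w(y)\,\P(Y\in\cC(X)\mid Y=y)$ with $\tilde w(y)=w(g(y))p(y)/\rho(g(y))$. It then cites Proposition 6 of \cite{ding2026conformal} as a black box, using $\tilde w(y)p(y|x)/p(y)=\tilde s(x,y)$.

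You instead prove that optimality result directly, via the pointwise Neyman--Pearson inequality. This is exactly the $m=1$ case of the Lagrangian argument the paper itself uses for the simultaneous-coverage proposition in Appendix~\ref{app:oracle_score_simultaneous_cov}, with $\mu=1/t_\alpha$.

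The two routes buy different things:
\begin{itemize}
\item The paper's reduction is shorter and isolates the reusable induced-weight identity. That identity is what feeds the simultaneous score $\frac{p(y|x)}{p(y)}\sum_j\lambda_j\tilde w_j(y)$.
\item Your version exposes the hypotheses the citation hides: $\rho(k)>0$ and, crucially, $t_\alpha>0$.
\item Your inequality also holds verbatim for randomized competitors $c(x,y)\in[0,1]$ and for a boundary-randomized threshold rule. So it essentially delivers the Remark without appealing to \citet{shao2008mathematical}.
\end{itemize}

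Your Step 3, however, is wrong as stated. If $t_\alpha\le 0$ (which forces $\alpha=0$), then $\tilde\cC_{t_\alpha}\equiv\cY$. Swapping in a positive threshold proves optimality of a \emph{different} set, and it does not preserve optimality, because dropping labels with $\tilde s=0$ can strictly reduce expected size. Such a positive threshold need not even exist when positive values of $\tilde s$ accumulate at $0$.

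In fact the statement itself fails in this case. Take $\cY=\{1,2\}$, $g$ the identity, $w\equiv 1/2$, $X$ uniform on $\{a,b\}$ with $p(1|a)=p(2|b)=1$, and $\alpha=0$. Then $t_\alpha=0$ satisfies the hypothesis and gives $\tilde\cC_0\equiv\cY$, with expected size $2$. But $\cC(a)=\{1\}$, $\cC(b)=\{2\}$ also has macro-coverage $1$, with expected size $1$.

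The correct reading is that the proposition requires $t_\alpha>0$. This is automatic whenever $\alpha>0$, since $t_\alpha\le 0$ yields coverage $1$. It matches the paper's formal simultaneous-coverage proposition, which explicitly assumes $t>0$. Under that hypothesis your Step 2 is a complete proof, and Step 3 should be replaced by this observation.
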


\begin{proof}
    First, observe that for any grouping function $g$ and weight function $w$, we can rewrite $\mathrm{MacroCov}_{g,w}$ in terms of the identity grouping function $g_{I}(y) = y$ and some weighting function $\tilde w: \cY \to \R_{\geq 0}$:
\begin{align*}
\operatorname{MacroCov}_{g,w}(\mathcal C\mid \mathcal H)
&=
\sum_{k \in \cK}
w(k)\,\mathbb P\bigl(Y_{n+1}\in \mathcal C(X_{n+1})\mid g(Y_{n+1})=k,\mathcal H\bigr) \\
&=
\sum_{k\in \cK}
w(k)
\sum_{y : g(y) = k}
\frac{p(y)}{\rho(k)}\,
\mathbb P\bigl(Y\in \mathcal C(X)\mid Y=y,\mathcal H\bigr) \\
&=
\sum_{y\in\mathcal Y}
\tilde w(y) \,
\mathbb P\bigl(Y\in \mathcal C(X)\mid Y=y,\mathcal H\bigr)  \numberthis \label{eq:gen_macrocov_as_weighted} \\
& \qquad \text{for } \quad \tilde w(y) := \frac{w(g(y)) \cdot p(y)}{\rho(g(y))}.
\end{align*}

In order to find the smallest prediction set $\cC$ satisfying $\operatorname{MacroCov}_{g,w}(\mathcal C\mid \mathcal{H}) \geq 1-\alpha$, we use a result from \cite{ding2026conformal}. We restate the relevant part here for completeness.

\emph{Proposition 6 of \cite{ding2026conformal}.}
Let $\omega: \cY \to [0,1]$ be a non-negative weighting function summing to one. For $t \in \R$, define 
\begin{align}\label{al:optimal_set_macro}
    \tilde \cC_{t}(x) = \left\{ y \in \cY : \tilde{s}(x,y) \geq t\right\} \quad \text{where}\quad  \tilde{s}(x,y)=\frac{\omega(y)}{p(y)} \cdot p(y|x)
\end{align}
and $p(y|x)$ denotes the conditional probability of $Y$ given $X=x$ and $p(y)$ is the marginal probability of $Y$.
Let $\alpha \in [0,1]$. If there exists $t_{\alpha}$ such that $\mathrm{MacroCov}_\omega(\tilde \cC_{t_{\alpha}}) = 1-\alpha$, then $\tilde \cC_{t_{\alpha}}$ is the optimal solution to 
    \begin{align} 
        \min_{\cC : \mathcal{X} \mapsto 2^\mathcal{Y}} \E[|\cC(X)|] \quad \text{subject to } \sum_{y \in \mathcal{Y}} \omega(y) \P( y \in \cC(X) \mid Y=y ) \geq 1-\alpha.
    \end{align}

The representation of $(g,w)$ macro-coverage given in \eqref{eq:gen_macrocov_as_weighted} allows us to apply this proposition, which directly yields the desired result.
\end{proof}

\section{Proofs of Propositions~\ref{prop:simultaneous_cov_via_max} and \ref{prop:optimal_score_simultaneous_coverage}}
\label{app:oracle_score_simultaneous_cov}

\begin{proof}[Proof of Proposition~\ref{prop:simultaneous_cov_via_max}]
Let $\cC_j$ refer to the prediction set produced by Algorithm \ref{alg:lwcp} for the $j$-th coverage constraint. By Theorem \ref{thm:macrocov_guarantee}, each $\cC_j$ satisfies $\mathrm{MacroCov}_{g_j, w_j} \geq 1-\alpha_j$. When the true label is in any of these sets, it is also in the union of such sets, so the set union simultaneously satisfies all of the coverage conditions. Because all $\cC_j$ are constructed by applying the same fixed score function $s$ and comparing against a threshold, the set union can be obtained by taking the max of the score thresholds implied by each constraint. 
\end{proof}

Next, we first state the formal version of Proposition~\ref{prop:optimal_score_simultaneous_coverage} and then prove it.

\begin{proposition}[Formal version of Proposition~\ref{prop:optimal_score_simultaneous_coverage}]
\label{prop:formal_optimal_score_simultaneous_coverage}
Suppose there exist \(\lambda=(\lambda_1,\ldots,\lambda_m)\in
\mathbb R_{\ge 0}^m\), not identically zero, and \(t>0\) such that the
deterministic prediction set
\[
    \mathcal C_{\lambda,t}(x)
    =
    \left\{
    y\in\mathcal Y:
    \frac{p(y\mid x)}{p(y)}
    \sum_{j=1}^m \lambda_j\tilde w_j(y)
    \ge t
    \right\}
\]
is feasible, meaning that
\[
    \mathrm{MacroCov}_{g_j,w_j}(\mathcal C_{\lambda,t})
    \ge 1-\alpha_j,
    \qquad j=1,\ldots,m,
\]
and satisfies
\begin{equation}\label{eq:slackness_condition}
    \lambda_j
    \left[
        \mathrm{MacroCov}_{g_j,w_j}(\mathcal C_{\lambda,t})
        -
        (1-\alpha_j)
    \right]
    =
    0,
    \qquad j=1,\ldots,m.
\end{equation}

Then \(\mathcal C_{\lambda,t}\) is an optimal solution to
\[
    \min_{\mathcal C:\mathcal X\to 2^{\mathcal Y}}
    \mathbb E|\mathcal C(X)|
    \quad
    \mathrm{s.t.}
    \quad
    \mathrm{MacroCov}_{g_j,w_j}(\mathcal C)
    \ge 1-\alpha_j,
    \qquad j=1,\ldots,m.
\]
\end{proposition}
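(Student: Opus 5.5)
The plan is a Lagrangian (Neyman--Pearson style) argument. It first rewrites every constraint in class-level form and then compares $\mathcal C_{\lambda,t}$ pointwise against an arbitrary feasible set.

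By the representation \eqref{eq:gen_macrocov_as_weighted} (with fixed weights, so no conditioning on $\mathcal H$), each constraint reads
\[
\mathrm{MacroCov}_{g_j,w_j}(\mathcal C)=\sum_{y\in\mathcal Y}\tilde w_j(y)\,\mathbb P(y\in\mathcal C(X)\mid Y=y).
\]
By Bayes' rule, $\mathbb P(y\in\mathcal C(X)\mid Y=y)=\mathbb E_X\!\left[\mathbf 1\{y\in\mathcal C(X)\}\,p(y\mid X)/p(y)\right]$. Writing $a(x,y):=\frac{p(y\mid x)}{p(y)}\sum_j\lambda_j\tilde w_j(y)$, this gives
\[
\sum_j\lambda_j\,\mathrm{MacroCov}_{g_j,w_j}(\mathcal C)=\mathbb E_X\Bigl[\sum_{y\in\mathcal C(X)}a(X,y)\Bigr],
\qquad
\mathbb E|\mathcal C(X)|=\mathbb E_X\Bigl[\sum_{y\in\mathcal C(X)}1\Bigr].
\]
Here I assume $p(y)>0$ for all $y$; otherwise I would restrict to the support.

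Next, take any feasible $\mathcal C$ and consider the Lagrangian-type quantity
\[
L(\mathcal C):=t\,\mathbb E|\mathcal C(X)|-\sum_j\lambda_j\,\mathrm{MacroCov}_{g_j,w_j}(\mathcal C)=\mathbb E_X\Bigl[\sum_{y\in\mathcal C(X)}\bigl(t-a(X,y)\bigr)\Bigr].
\]
For each fixed $x$, the inner sum is minimized over subsets of $\mathcal Y$ by including exactly those $y$ with $a(x,y)\ge t$; ties with $a=t$ contribute zero. That minimizer is $\mathcal C_{\lambda,t}(x)$, so $L(\mathcal C_{\lambda,t})\le L(\mathcal C)$.

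Now I use complementary slackness \eqref{eq:slackness_condition}, which gives $\sum_j\lambda_j\,\mathrm{MacroCov}_{g_j,w_j}(\mathcal C_{\lambda,t})=\sum_j\lambda_j(1-\alpha_j)$. Feasibility of $\mathcal C$ together with $\lambda_j\ge0$ gives $\sum_j\lambda_j\,\mathrm{MacroCov}_{g_j,w_j}(\mathcal C)\ge\sum_j\lambda_j(1-\alpha_j)$. Substituting both into $L(\mathcal C_{\lambda,t})\le L(\mathcal C)$ yields $t\,\mathbb E|\mathcal C_{\lambda,t}(X)|\le t\,\mathbb E|\mathcal C(X)|$. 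Dividing by $t>0$ gives optimality, and feasibility of $\mathcal C_{\lambda,t}$ is assumed.

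There is no real obstacle. The only care needed is the change-of-measure identity, including integrability, which is trivial since $\mathcal Y$ is finite, and the fact that the pointwise minimization may be done inside the expectation. The latter holds because $\mathcal C$ ranges over all measurable set-valued maps, and the minimizer $\mathcal C_{\lambda,t}$ is measurable since $p(y\mid x)$ is.
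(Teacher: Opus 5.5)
Your proof is correct and follows essentially the same route as the paper. Both write coverage and expected size as linear functionals via Bayes' rule, minimize the Lagrangian pointwise, and finish with feasibility plus complementary slackness; your $L$ equals $t$ times the paper's Lagrangian with $\mu_j=\lambda_j/t$, up to an additive constant. The only difference is that the paper optimizes over relaxed rules $c:\mathcal X\times\mathcal Y\to[0,1]$, which also gives optimality against randomized sets, while you restrict to deterministic sets, which is all the statement requires.
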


\begin{remark}
The proposition is stated for deterministic prediction sets. If the condition above is not satisfied by any $\mathcal{C}_{\lambda,t}$, then an optimal relaxed solution may
require randomization on the boundary in order to satisfy the active
constraints exactly.
\end{remark}

\begin{proof}
Let $c(x,y) = \mathbf{1}\{y \in \mathcal{C}(x)\}$. We prove the result by considering the more general relaxed problem in
which
\[
    c:\mathcal X\times\mathcal Y\to[0,1]
\]
is allowed to be a randomized set-inclusion rule.
Deterministic prediction sets are
recovered when $c$ takes values in $\{0,1\}$. 
Fix $j\in[m]$. For each group $k\in\mathcal K_j$, let $\rho_j(k)=\mathbb P(g_j(Y)=k)$.

For a relaxed rule $c$, the $j$th generalized macro-coverage functional is
\[
    \mathrm{MacroCov}_{g_j,w_j}(\mathcal{C})
    =
    \sum_{k\in\mathcal K_j}
    w_j(k)
    \mathbb E[c(X,Y)\mid g_j(Y)=k].
\]
Expanding over labels inside each group gives
\begin{align*}
    \mathrm{MacroCov}_{g_j,w_j}(c)
    &=
    \sum_{k\in\mathcal K_j}
    w_j(k)
    \sum_{y: g_j(y)=k}
    \mathbb P(Y=y\mid g_j(Y)=k)
    \mathbb E[c(X,y)\mid Y=y] \\
    &=
    \sum_{k\in\mathcal K_j}
    w_j(k)
    \sum_{y:g_j(y)=k}
    \frac{p(y)}{\rho_j(k)}
    \mathbb E[c(X,y)\mid Y=y] \\
    &=
    \sum_{y\in\mathcal Y}
    \frac{w_j(g_j(y))p(y)}{\rho_j(g_j(y))}
    \mathbb E[c(X,y)\mid Y=y] \\
    &=
    \sum_{y\in\mathcal Y}
    \tilde w_j(y)
    \mathbb E[c(X,y)\mid Y=y].
\end{align*}
Using Bayes' rule,
\[
    p(x\mid y)
    =
    \frac{p(y\mid x)p(x)}{p(y)},
\]
we can rewrite this as
\begin{align}
    \mathrm{MacroCov}_{g_j,w_j}(\mathcal{C})
    &=
    \sum_{y\in\mathcal Y}
    \tilde w_j(y)
    \int c(x,y)p(x\mid y)\,dx \nonumber\\
    &=
    \int p(x)
    \sum_{y\in\mathcal Y}
    c(x,y)
    \tilde w_j(y)
    \frac{p(y\mid x)}{p(y)}
    \,dx.
\label{eq:coverage_linear_form}
\end{align}
Similarly, the expected size is
\begin{equation}
\label{eq:size_linear_form}
    \mathbb E|\mathcal C(X)|
    =
    \mathbb E\left[\sum_{y\in\mathcal Y}c(X,y)\right]
    =
    \int p(x)\sum_{y\in\mathcal Y}c(x,y)\,dx.
\end{equation}

Thus the relaxed oracle problem is the linear program
\[
    \min_{0\le c\le 1}
    \int p(x)\sum_{y\in\mathcal Y}c(x,y)\,dx
\quad \text{s.t.}
    \int p(x)
    \sum_{y\in\mathcal Y}
    c(x,y)
    \tilde w_j(y)
    \frac{p(y\mid x)}{p(y)}
    \,dx
    \ge
    1-\alpha_j,
    \quad
    j=1,\ldots,m.
\]

For multipliers \(\mu_1,\ldots,\mu_m\ge 0\), the Lagrangian is
\begin{align*}
    \mathcal L(c,\mu)
    &=
    \mathbb E|\mathcal C(X)|
    +
    \sum_{j=1}^m
    \mu_j
    \left[
        (1-\alpha_j)
        -
        \mathrm{MacroCov}_{g_j,w_j}(c)
    \right] \\
    &=
    \sum_{j=1}^m
    \mu_j(1-\alpha_j)
    +
    \int p(x)
    \sum_{y\in\mathcal Y}
    c(x,y)
    \left[
        1
        -
        \frac{p(y\mid x)}{p(y)}
        \sum_{j=1}^m
        \mu_j\tilde w_j(y)
    \right]
    dx .
\end{align*}

For fixed \(\mu\), the Lagrangian is separable in \((x,y)\). Therefore,
minimizing it over \(c(x,y)\in[0,1]\) can be done pointwise. The coefficient
of \(c(x,y)\) is
\[
    1
    -
    \frac{p(y\mid x)}{p(y)}
    \sum_{j=1}^m
    \mu_j\tilde w_j(y).
\]
Hence any minimizer of the Lagrangian satisfies
\begin{equation}\label{eq:lagrangain_soln_form}
    c_\mu(x,y)
    =
    \begin{cases}
    1, &
    \displaystyle
    \frac{p(y\mid x)}{p(y)}
    \sum_{j=1}^m
    \mu_j\tilde w_j(y)>1,\\[1em]
    0, &
    \displaystyle
    \frac{p(y\mid x)}{p(y)}
    \sum_{j=1}^m
    \mu_j\tilde w_j(y)<1,\\[1em]
    \text{any value in }[0,1], &
    \displaystyle
    \frac{p(y\mid x)}{p(y)}
    \sum_{j=1}^m
    \mu_j\tilde w_j(y)=1.
    \end{cases}
\end{equation}

Now take the deterministic set \(\mathcal C_{\lambda,t}\) from the
proposition statement and let
\[
    c_{\lambda,t}(x,y)
    =
    \mathbf 1\{y\in\mathcal C_{\lambda,t}(x)\}.
\]
Then \(c_{\lambda,t}\) is a pointwise minimizer of the Lagrangian as given by equation~\eqref{eq:lagrangain_soln_form} with $\mu_j = \frac{\lambda_j}{t}$.

In order to verify optimality, let \(c\) be any feasible relaxed rule.
Since \(c_{\lambda,t}\) minimizes \(\mathcal L(\cdot,\mu)\),
\[
    \mathcal L(c_{\lambda,t},\mu)
    \le
    \mathcal L(c,\mu).
\]
Since \(c\) is feasible,
\[
    (1-\alpha_j)-\mathrm{MacroCov}_{g_j,w_j}(c)\le 0 \ \forall \ j,
\]
and 
\[
    \mathcal L(c,\mu)
    \le
    \mathbb E\left[\sum_{y\in\mathcal Y}c(X,y)\right].
\]
On the other hand, since $\mathcal{C}_{\lambda,t}$ satisfies condition~\eqref{eq:slackness_condition},
\[
    \mathcal L(c_{\lambda,t},\mu)
    =
    \mathbb E|\mathcal C_{\lambda,t}(X)|.
\]
Combining these inequalities gives
\[
    \mathbb E|\mathcal C_{\lambda,t}(X)|
    \le
    \mathbb E\left[\sum_{y\in\mathcal Y}c(X,y)\right]
\]
for every feasible relaxed rule \(c\). Thus \(\mathcal C_{\lambda,t}\) is an optimal deterministic solution to the
simultaneous coverage-constrained problem.
\end{proof}

\section{Additional Experimental Results}
\label{sec:additional_experiments_APPENDIX}

Table \ref{tab:plantnet-trunc_generalized_alpha=0.05} is analogous to Table \ref{tab:plantnet-trunc_generalized_alpha=0.1} in the main text, but for $\alpha = 0.05$. 

\begin{table}[h]
\centering
\setlength{\tabcolsep}{2.2pt}
\caption{Targeting generalized macro-coverage on \truncplant\ at $\alpha = 0.05$. The left panel targets tail-focused macro-coverage; the right panel targets genus-level macro-coverage.}\label{tab:plantnet-trunc_generalized_alpha=0.05}
\begin{tabular}{ll rrr rrr}
\toprule
 & & \multicolumn{3}{c}{\tcbox[on line, colback=green!10, colframe=green!10, arc=3pt, boxsep=1pt, left=3pt, right=3pt, top=1pt, bottom=1pt]{$\mathsf{Goal}: \mathrm{MacroCov}_{\mathrm{tail}} \geq 0.95$}} & \multicolumn{3}{c}{\tcbox[on line, colback=orange!15, colframe=orange!15, arc=3pt, boxsep=1pt, left=3pt, right=3pt, top=1pt, bottom=1pt]{$\mathsf{Goal}: \mathrm{GenusMacroCov} \geq 0.95$}} \\
\cmidrule(lr){3-5} \cmidrule(lr){6-8}
Method & Score & \small{$\mathrm{MarginalCov}$} & \small{$\mathrm{MacroCov}_{\mathrm{tail}}$} & \small{$\mathrm{AvgSize}$} & \small{$\mathrm{MarginalCov}$} & \tiny{$\mathrm{GenusMacroCov}$} & \small{$\mathrm{AvgSize}$} \\
\midrule
\multirow{2}{*}{\standard} & $s_{\softmax}$ & $0.951_{\pm 0.001}$ & $0.785_{\pm 0.002}$ & $5.9_{\pm 0.1}$ & $0.951_{\pm 0.000}$ & $0.939_{\pm 0.001}$ & $5.9_{\pm 0.1}$ \\
 & $\hat{s}_{g,w}$ & $0.949_{\pm 0.001}$ & $0.899_{\pm 0.001}$ & $3.8_{\pm 0.1}$ & $0.950_{\pm 0.000}$ & $\mathbf{0.963}_{\pm 0.000}$ & $4.5_{\pm 0.0}$ \\
\midrule
\classwise & $s_{\softmax}$ & $0.987_{\pm 0.000}$ & $\mathbf{0.996}_{\pm 0.000}$ & $263.4_{\pm 0.9}$ & $0.970_{\pm 0.000}$ & $\mathbf{0.970}_{\pm 0.001}$ & $60.2_{\pm 1.1}$ \\
\midrule
\multirow{2}{*}{\makecell[l]{\textsc{Label-}\\\textsc{weighted}}} & $s_{\softmax}$ & $0.993_{\pm 0.000}$ & $\mathbf{0.950}_{\pm 0.001}$ & $71.5_{\pm 1.9}$ & $0.962_{\pm 0.001}$ & $\mathbf{0.952}_{\pm 0.001}$ & $8.1_{\pm 0.2}$ \\
 & $\hat{s}_{g,w}$ & $0.984_{\pm 0.000}$ & $\mathbf{0.951}_{\pm 0.001}$ & $\mathbf{10.7}_{\pm 0.3}$ & $0.930_{\pm 0.001}$ & $\mathbf{0.952}_{\pm 0.001}$ & $\mathbf{3.4}_{\pm 0.1}$ \\
\bottomrule
\end{tabular}
\end{table}


\end{document}